\begin{document}
\title{List-Decodable Byzantine Robust PIR: Lower Communication Complexity, Higher Byzantine Tolerance, Smaller List Size}
\titlerunning{List-Decodable Byzantine Robust PIR}
%
\author{
Pengzhen Ke\inst{1}
\and
Liang Feng Zhang\inst{1 *}
\and
Huaxiong Wang\inst{2}
\and 
Li-Ping Wang\inst{3}
}
\authorrunning{P. Ke, L. F. Zhang et al.}
%
\institute{School of Information Science and
 Technology, ShanghaiTech University, Shanghai, China 
\\
\email{\{kepzh,zhanglf\}@shanghaitech.edu.cn}
\and
 School of Physical and Mathematical
 Sciences, Nanyang Technological University, Singapore
\\
\email{hxwang@ntu.edu.sg}
\and
Institute of Information Engineering, Chinese Academy of Sciences, Beijing, China
\\
\email{wangliping@iie.ac.cn}
}
\maketitle              
\begin{abstract}
    Private Information Retrieval (PIR) is a privacy-preserving primitive in cryptography. Significant endeavors have been made to address the variant of PIR concerning the malicious servers. 
    Among those endeavors, list-decodable  Byzantine robust  PIR schemes may 
    tolerate a majority of malicious responding servers that  provide incorrect answers. 
        In this paper, we propose two perfect list-decodable BRPIR schemes. Our schemes are the first ones that can simultaneously handle 
      a majority of malicious responding servers, 
      achieve a communication
     complexity of  \( o(n^{1/2}) \) for a database of size $n$, and provide a 
     nontrivial estimation on the list sizes.
     Compared with the existing solutions, our schemes   
     attain  lower communication complexity,
 higher byzantine tolerance, and smaller list size.  

\keywords{Byzantine Robust PIR \and List-decoding Algorithms \and Malicious Servers \and Security.}
\end{abstract}

\section{Introduction} \label{sec:intro}

{
    Private Information Retrieval (PIR) \cite{chor1998private} allows a client to retrieve an element $x_i$ from a database ${\bf x}=(x_1,\ldots,x_n)$ without disclosing to the servers which specific element is being accessed. 
    PIR is a fundamental privacy-preserving primitive in cryptography and has widespread applications in systems such as private media browsing \cite{gupta2016scalable},  metadata-private messaging \cite{angel2016unobservable} and location-based services for smartphones \cite{khoshgozaran2009private,yannuzzi2014key}.
 
    The efficiency of a PIR scheme is mainly measured by its {\em communication complexity}, i.e.,
     the total number of bits that have to be exchanged between 
      the client and all servers  in order to retrieve one bit of the database. 
    A {\em trivial PIR} scheme requires the client to download the entire database from a server. 
    Despite of achieving {\em perfect} privacy, it incurs a prohibitive communication complexity that scales  linearly  in the size $n$ of the database. 
    Chor et al. \cite{chor1998private} showed that the trivial PIR scheme is optimal in terms of communication complexity if there is   only one  server and   perfect privacy is required.
    In order to  achieve  the nontrivial communication complexity of 
    $o(n)$, one has to either   give up the perfect privacy 
    or use multiple servers. One the one hand, by giving up the perfect privacy, single-server PIR 
    schemes \cite{angel2018pir,kushilevitz1997replication,melchor2016xpir,patel2018private} 
    with $o(n)$ communication complexity have
       devised under various computational assumptions. 
On the other hand, 
a {\em multi-server} PIR model \cite{beimel2001information,dvir20162,efremenko20093,woodruff2005geometric,yekhanin2008towards} is necessary if both perfect  privacy and $o(n)$ communication complexity are desired.
    In a $k$-server  PIR scheme, the database $\bf x$ is replicated among $k$ servers and the 
    client retrieves a database element $x_i$ by querying every server once, such that each individual server learns no information about the retrieval index $i$.
    A {\it $t$-private} $k$-server $(t<k)$ PIR scheme \cite{woodruff2005geometric} provides the stronger security guarantee that the retrieval index $i$ is perfectly private for any $t$ colluding servers.

    Compared with the single-server PIR model, the multi-server PIR model fundamentally differs across three critical dimensions:
    \begin{itemize}
        \item \textbf{Privacy}: Single-server PIR schemes rely on computational hardness assumptions, whereas most of the multi-server PIR achieves \emph{information-theoretic} privacy assuming non-collusion.
        
        \item \textbf{Computation Complexity}: Multi-server PIR schemes 
        are free of the heavy public-key operations that are required by single-server PIR and thus much faster (possibly by several orders of magnitude).
        
        \item \textbf{Error Tolerance}: While single-server PIR schemes face complete failure under server compromise, multi-server PIR may still guarantee correct retrieval if some of the servers are controlled maliciously. 
    \end{itemize}
    
\noindent
    These architectural advantages make multi-server PIR an attractive approach, offering stronger security guarantees alongside practical efficiency. However, the involvement of more servers also introduces new challenges. As the number of participating servers increases, so does the risk of receiving incorrect responses—whether due to network failures, outdated data, or deliberate adversarial behavior. Over the past decade, extensive research~\cite{devet2012optimally,goldberg2007improving,kurosawa2019correct,zhang2022byzantine} has focused on combating {\it malicious} servers that may collude and return incorrect answers, potentially leading the client to reconstruct a wrong database entry.
 
    Beimel and Stahl \cite{beimel2002robust} introduced the notion of $b$-{\it Byzantine robust} 
    $k$-out-of-$\ell$ PIR (BRPIR) \cite{banawan2018capacity,tajeddine2019private,zhang2022byzantine},
     which allows a client to retrieve the correct value  if any $k$ out of the 
     $\ell$ servers respond and at most $b$  out of the $k$ responding servers 
     are malicious (called ``Byzantines'') and  return incorrect answers.
     If $b=0$, such schemes are simply called $k$-out-of-$\ell$ robust PIR (RPIR) schemes.  
    They showed that any $a$-out-of-$\ell$ RPIR scheme can be used to construct
     a  $b$-Byzantine robust $k$-out-of-$\ell$ PIR scheme for any $a<k<\ell$ and $b\leq  (k-a)/2$.
    This class of BRPIR schemes can   handle the  cases where a minority of the responding servers
    (i.e., $b < k/2 $) 
    are malicious and  provide incorrect responses.
   However, they require the client to execute  a reconstruction algorithm whose running time may 
   be exponential in $k$, the  number of responding servers. 
  For the same setting of $b<k/2$,
     Kurosawa \cite{kurosawa2019correct} proposed an efficient 
    BRPIR scheme with  polynomial time reconstruction, which 
    is based on the  RPIR scheme of 
     Woodruff and Yekhanin \cite{woodruff2005geometric} and 
     utilizes  the Berlekamp-Welch decoding algorithm \cite{1986Error} 
     for Reed-Solomon codes. 

    While conventional BRPIR schemes are only applicable when $b < k/2$, they are incapable of addressing scenarios where a majority of the responses are erroneous. In practical situations, due to reasons such as (1) outdated databases, (2) poor or unstable mobile networks, and (3) significant communication channel noise, it is likely that most responses could be incorrect. Thus, it becomes imperative to develop a BRPIR scheme that can support higher error tolerance.
    The 
  {\em list-decodable} BRPIR schemes of \cite{devet2012optimally,goldberg2007improving} 
  represent  a class of schemes that incorporate list-decoding techniques from coding theory 
  into  RPIR and enable one to handle the cases of  $b \geq k/2$.
    In list-decodable BRPIR schemes, the client does not reconstruct a single definitive result but a list that contains the correct result. 
    Goldberg \cite{goldberg2007improving} introduced the notion of 
     list-decodable $b$-Byzantine-robust  $t$-private $k$-out-of-$\ell$ PIR and constructed schemes
      that can handle $b$ Byzantine servers for any  \(b < k - \sqrt{kt}\) (Table \ref{table:1}). 
    The $b$-error correction achieved by \cite{goldberg2007improving} is particularly well-suited for Reed-Solomon code-based PIR schemes since it reaches the Johnson bound. 
    The Johnson bound is a crucial threshold that delineates the upper limit of efficient
     polynomial-time list decoding for error-free recovery, ensuring that the maximum
      number of correctable errors in a polynomial-time framework does not exceed 
      this limit. Reaching this bound indicates highly efficient and reliable error 
      correction in managing erroneous responses in the RS code based PIR protocols.
    Devet, Goldberg, and Heninger \cite{devet2012optimally} surpassed the Johnson bound with
     a {\it statistically correct} scheme that allows \( b < k - t - 1 \) \
     (Table \ref{table:1}). 
    However, the higher robustness is achieved at the price of statistical correctness, i.e., 
    allowing a non-zero probability of failure in reconstruction.
    While the state-of-the-art PIR schemes \cite{chee2013query,dvir20162} in the
     honest-but-curious server model achieve a sub-polynomial communication complexity, 
     the list-decodable PIR schemes of \cite{goldberg2007improving,devet2012optimally} exhibit 
     a {\em relatively high communication complexity} of $O(\ell n^{1/2})$. 
Neither  \cite{goldberg2007improving} nor
 \cite{devet2012optimally} gives a good estimation on the 
   maximum size of the list that may be output by the client's 
   reconstruction algorithm. 
    In both works, the  data block $ x_i \in \mathbb
    {F}_p $ of interest 
    is embedded into   a degree-\( t \) polynomial and as the
     constant term. By Johnson Bound,   the number of  polynomials 
     reconstructed by the client's algorithm is at most \( pk^2 \).
      However, the number of the candidate  values for the data block 
      (or equivalently the constant terms of the reconstructed polynomials)
      could be as high as  \( p \). 
    In the worst-case, both schemes might output the entire field 
     \( \mathbb{F}_p \) as the  list of the candidate values for \( x_i \), rendering the schemes ineffective. 

    In most application scenarios of PIR,   the database size $n$ is significantly larger   the parameters $ b,t,k$, and $\ell$.  
    If we restrict  to list-decodable BRPIR schemes that can handle a majority 
    of malicious responding servers (i.e., \( b \geq k/2 \)), the 
    existing solutions have several limitations. 
    First, they cannot  achieve  a communication complexity of \( o(n^{1/2}) \). This limitation 
     persists even in scenarios where a minority  of the responding  servers are Byzantine 
     (i.e., \( b < k/2 \)) or when the privacy threshold $t$ is relatively low (e.g., \( t = O(1) \)). 
Second, they lack a good estimation on the  size of the list that contains the 
block $x_i$ of interest. 
   In this paper, we are interested in  
      list-decodable BRPIR schemes that can simultaneously handle 
      a majority of malicious responding servers (i.e., \( b \geq k/2\)), 
      achieve a communication
     complexity of  \( o(n^{1/2}) \) , and provide a 
     nontrivial estimation on the list sizes.


\begin{table*}[htbp]
    \centering
    \caption{Comparisons for $t$-private list-decodable $k$-out-of-$\ell$ $b$-Byzantine-robust PIR schemes over database in  $\mathbb{F}_{p}^{n}$}
    \label{table:1}
    \begin{threeparttable}
        \begin{tabular}{ccccccc}
            \toprule
            ~~~Reference~~~ & ~~~List-decodable~~~ &  ~~~\makecell{Byzantine\\Bound}~~~ & ~~~\makecell{Communication\\Complexity}~~~  & ~~~\makecell{Maximum\\ List Size}~~ \\
            \midrule
            \cite{goldberg2007improving} & Perfect & $k-\sqrt{kt}$ & $O(\ell n^{1/2})$  & $p$ \\
            \cite{devet2012optimally} & Statistical & $k-t-1$ & $O(\ell n^{1/2} )$  & $p$ 
            \\
            $\Gamma_1$ ({\bf Fig.}  \ref{fig:construction1}) & Perfect & ${ k-2}$  & {$O( {\ell w_1 n^{1/w_1} })$} \tnote{$\dagger$} & $(\frac{k}{k-b})^{w_1 t}$ \\
           $\Gamma_2$ ({\bf Fig.}  \ref{fig:construction2}) & Perfect & $k - \sqrt{kt}$ & $O(\ell w_2  n^{1/w_2})$ \tnote{$\ddagger$}  & $2k$ \\
            \bottomrule
        \end{tabular}
        \begin{tablenotes}
            \item[$\dagger$] \( w_1 \) is a parameter chosen to balance the tradeoff between communication complexity and list size, satisfying \( w_1 < \frac{2k - 2b - 1}{t} \).
            \item[$\ddagger$] \( w_2 \) is a constant defined as \( w_2 = \left\lfloor \frac{(k-b)^2}{kt} \right\rfloor \).
        \end{tablenotes}
    \end{threeparttable}
    \label{table:comparison}
\end{table*}

\subsection{Our Results}   
 In this paper, we construct  two perfect $L$-list decodable $b$-Byzantine-robust 
    $t$-private $k$-out-of-$\ell$  PIR schemes $\Gamma_1$ and $\Gamma_2$ (see {\bf Table} \ref{table:1}). 
In both schemes, the Byzantine robustness  parameter $b$ can be at least $k/2$, the communication complexity
is $o(n)$, and the 
 list size \( L \) solely  depends  on the number $k$ of responding
  servers and the Byzantine robustness \( b \), and is independent of the size $p$ of
  the space where each data block is taken from.

\vspace{2mm}
\noindent
{\bf Byzantine robustness.}
    The scheme $\Gamma_1$ has Byzantine robustness $b < k - \sqrt{kt}$, which is
    comparable to  \cite{goldberg2007improving}. 
 The scheme  $\Gamma_2$ has a Byzantine robustness of $b \leq k-2$
 and 
  surpasses both the Byzantine robustness $b < k-\sqrt{kt}$  of 
 \cite{goldberg2007improving} and the Byzantine robustness $b< k-t-1$ of
    \cite{devet2012optimally}. 

\vspace{2mm}
\noindent
{\bf Communication complexity.}
Compared with  \cite{goldberg2007improving} and 
     \cite{devet2012optimally}, our schemes  achieve a substantially lower communication complexity 
of  \( o(n^{1/2}) \). The communication efficiency of our schemes is 
 particularly high when the privacy threshold       \( t \) is low.
    For instance, with parameters \( (k, b, t) = (20, 12, 1) \), the 
    communication complexity of the schemes \( \Gamma_1 \) and 
  \( \Gamma_2 \) can be  as low as     \( O(n^{1/14}) \) and  \( O(n^{1/4}) \), respectively.
   In contrast, for the same values of $ (k, b, t)$,  the communication complexity of  
  \cite{goldberg2007improving} and \cite{devet2012optimally} is \( O(n^{1/2}) \). 
    More detailed comparisons are provided in {\bf Table.}~\ref{tab:vary k, b, t, F}. 

\vspace{2mm}
\noindent
{\bf List size.}
       The scheme \( \Gamma_1 \) is $L$-list decodable for \(L= \left(  k/(k-b) \right)^{w_1 t} \),
      where \( w_1 < (2k-2b-1)/t  \) is a parameter chosen by the client to balance the
       tradeoff between the communication complexity and the list size. The  scheme 
        \( \Gamma_2 \) is  $L$-list decodable for \( L = 2k \). Furthermore,  when \( b > k/2 \) is large,
the list size of $\Gamma_2$ can be as small as 
         \( L = k \). Since the size of the finite field \( p \) in list-decodable BRPIR is much
          larger than \( k, b, t \),
compared with the $p$-list decodable schemes of 
 \cite{goldberg2007improving} and \cite{devet2012optimally},     
our schemes give   substantial improvements. 
}

\subsection{Background}

\paragraph{\bf From RPIR to List-Decodable BRPIR.}
The starting point of our construction is the robust PIR (RPIR) scheme introduced by Woodruff and Yekhanin~\cite{woodruff2005geometric}. Their work established a powerful framework that reduces the problem of private information retrieval to the task of reconstructing a low-degree univariate polynomial. A detailed description of the Woodruff-Yekhanin RPIR scheme is provided in Section~\ref{sec:Basic Unique decoding PIR of woodruff and Yekhanin}.
In Woodruff-Yekhanin RPIR scheme, the client obtains both the values and the derivatives of a polynomial \( f(\lambda) \) at \( k \) positions, that is, $\{f(\lambda_j), f'(\lambda_j)\}_{j\in [k]}$. The degree of this polynomial is set to be at most \( 2k - 1 \), so that it can be uniquely reconstructed from these evaluations and derivatives. Moreover, the higher the degree of \( f(\lambda) \), the lower the communication complexity of the scheme, which leads to improved efficiency.
A related construction by Beimel and Stahl~\cite{beimel2002robust} addresses the setting with Byzantine servers. When the number of corrupted servers is at most \( b \), and at least \( k - b \) evaluations and derivatives are correct, their scheme reconstructs a polynomial of degree at most \( 2(k - 2b) - 1 \) at each of the \( k - 2b \) positions. A voting-like mechanism is then employed across these reconstructions to recover the correct polynomial \( f(\lambda) \).
However, when \( b \geq k/2 \), unique decoding strategies such as voting are no longer viable. In this case, one must resort to list decoding. 

{
\paragraph{\bf Sudan list decoding algorithm.}
When the number of Byzantine servers exceeds the threshold for unique decoding, list decoding becomes necessary.
Sudan~\cite{sudan1997decoding} proposed one of the earliest list decoding algorithms for Reed--Solomon codes, which can be viewed as a special case of univariate multiplicity codes.
Given \( k \) pairs \( (\lambda_j, \alpha_j) \), the algorithm outputs all low-degree polynomials \( \tilde{f}(\lambda) \) such that \( \tilde{f}(\lambda_j) = \alpha_j \) for at least \( k - b \) indices \( j \in [k] \), where \( b \) is the number of errors.
A detailed description of the Sudan algorithm is provided in Section~\ref{sec:umc}.
}

\subsection{Our Approach}

Building on the polynomial reconstruction framework by Woodruff and Yekhanin in \cite{woodruff2005geometric}, we develop two perfect list-decodable PIR schemes that remain effective even when the fraction of malicious servers exceeds the unique decoding threshold. 
The term ``perfect'' indicates that the list produced by this scheme always contains the queried result.
Our goal is to retain the high-degree structure of the polynomial \( f(\lambda) \), thereby minimizing communication complexity, while extending the scheme’s robustness through list decoding techniques.

Recall that when the number of Byzantine servers \( b \) is less than \( k/2 \), unique decoding remains feasible, and mechanisms like majority voting over polynomial reconstructions at different positions can correctly recover the polynomial $f(\lambda)$. However, once \( b \geq k/2 \), unique decoding breaks down, and we must resort to list decoding—outputting a small list of candidate polynomials that are guaranteed to contain the correct one.

The challenge in this setting is twofold: first, to design a decoding algorithm that produces only a constant-size list (or only polynomial-size in \( k \)), and second, to keep the degree of \( f(\lambda) \) as high as possible, so as to preserve the communication efficiency inherited from the Woodruff-Yekhanin RPIR structure. To address the challenge, we introduce two distinct decoding strategies in our constructions, each leading to a perfect list-decodable PIR scheme.

In the first scheme, \( \Gamma_1 \), we introduce a method called \emph{overinterpolation}. This method imposes a strict upper bound on the degree of any admissible interpolated polynomial, allowing us to discard inconsistent candidates while ensuring that the correct polynomial remains in the list.  
For example, we interpolate a polynomial from each subset of \( k - b - 1 \) out of \( k \) values and retain only those polynomials whose degree is at most \( (k - b) / 2 \).  
By carefully tuning this degree threshold, we can effectively bound the list size within a polynomial in \( k \).
This method is conceptually simple and yields both a small list size and low computational complexity when \( \binom{k}{b} \) is small. However, when \( \binom{k}{b} \) becomes large—if \(b\) and \(k - b\) are both in $O(k)$ —the computational cost grows exponentially with \( k \).  
To address such a situation, we further develop an optimized algorithm that reduces the overall complexity to a polynomial in \( k \), even in such parameter settings.

In the second scheme, \( \Gamma_2 \), 
we propose a new list‐decoding algorithm for order-$1$ multiplicity codes that follows Sudan list decoding algorithm.
Unlike Sudan list decoding algorithm, which operates on Reed--Solomon codewords consisting of the point-value pair \( (\lambda, f(\lambda)) \), our algorithm works with multiplicity codewords composed of the point-value-derivative tuple \( (\lambda, f(\lambda), f'(\lambda)) \).
Given \( k \) tuples \( (\lambda_j, \alpha_j, \beta_j) \), our list decoding algorithm constructs a pair of polynomials: a base interpolating polynomial \( Q^{\text{base}}(\lambda, \alpha) \), and an extended polynomial \( Q^{\text{ext}}(\lambda, \alpha, \beta) \) that encodes additional derivative information. Following a process analogous to Sudan's approach, we identify a list of candidate polynomials \( \tilde{f}(\lambda) \) that are consistent with at least \( k - b \) of the given evaluations and derivatives.
A key requirement of this approach is that the degree of the target polynomial \( f(\lambda) \) must not exceed \( (k - b)^2 / k \). Compared to the Woodruff-Yekhanin RPIR scheme, where the degree reaches \( 2k - 1 \), this constraint results in a higher communication overhead. Nonetheless, it enables decoding in the environments where unique decoding is no longer feasible, while ensuring that both the list size and decoding time remain polynomial in \( k \).

Both constructions can be seen as perfect list-decodable generalizations of the Woodruff-Yekhanin scheme, and are particularly suited for the environments in which unique decoding is provably impossible.


\subsection{Related Work}
    The investigation of Byzantine robust PIR (BRPIR) schemes and list-decodable BRPIR schemes has made significant strides in addressing the challenges posed by malicious servers in PIR scenarios.

  \vspace{2mm}
  \noindent
    {\bf Byzantine Robust PIR (BRPIR)} schemes\cite{banawan2018capacity,beimel2002robust,kurosawa2019correct,tajeddine2019private,zhang2022byzantine} enable the client to both correctly retrieve the desired database element despite the presence of a limited number of malicious servers but also allow the client to identify which servers are acting maliciously. 
    Beimel first introduced the idea of Byzantine robustness in \cite{beimel2002robust}.     
    Given an $a$-out-of-$\ell$ robust PIR scheme, they provided a general construction from an $a$-out-of-$\ell$ robust PIR scheme to a $k$-out-of-$\ell$, $b$-Byzantine-robust PIR scheme for any $a<k<\ell$ with $b = (k-a)/2$. 
    Applying this construction to the robust PIR scheme proposed by Woodruff and Yekhanin \cite{woodruff2005geometric}, which retrieves hidden data blocks by interpolating polynomials, one can obtain an efficient $t$-private $k$-out-of-$\ell$ $b$-Byzantine-robust PIR scheme, where $b < k/2$, and the communication complexity of this scheme is $O(n^{1/ \lfloor (2(k-2b)-1)/t \rfloor})$. However, this scheme requires the client to perform computations that grow exponentially with the number of responding servers $k$, leading to high computational complexity for the client.
    By extending the Berlekamp-Welch algorithm \cite{1986Error}, a decoding algorithm for Reed-Solomon (RS) codes, to the case of first-order derivatives, Kurosawa \cite{kurosawa2019correct} reduced the client-side computation of the construction by Beimel and Stahl to polynomial levels. Nonetheless, the Byzantine tolerance remains bounded by $b < k/2$.

      \vspace{2mm}
  \noindent
    {\bf List-Decodable BRPIR} schemes \cite{devet2012optimally,goldberg2007improving} are suitable for scenarios where conventional BRPIR schemes fail to operate, particularly when $b \geq k/2$.  
    This idea of list-decoding in BRPIR was first introduced by Goldberg \cite{goldberg2007improving} to improve the bound of byzantine tolerance $b$. 
    Goldberg proposed a $t$-private list-decodable $k$-out-of-$\ell$ $b$-Byzantine-robust PIR scheme for any $b < k - \sqrt{kt}$. 
    In Goldberg's scheme, the database is structured as a matrix. 
    The client retrieves a row of this matrix from each server in the form of a degree-$t$ RS codeword. 
    The data block is then reconstructed using the Guruswami-Sudan list decoding algorithm \cite{guruswami1998improved}, which is a list decoding algorithm for RS codes, applied to the codewords of RS codes.   
    For cases where $t < k/4$, this scheme can achieve $b \geq k/2$. 
    Devet, Goldberg and Heninger \cite{devet2012optimally} improved the Byzantine bound of list-decodable BRPIR by proposing a {\em statistical} list-decodable BRPIR scheme with byzantine bound $b < k - t - 1$.
    In both works of \cite{goldberg2007improving} and \cite{devet2012optimally}, the client’s desired data block \( x_i \in \mathbb{F}_p \) is embedded in the constant term of a degree-\( t \) polynomial, which is then subjected to list decoding. According to the bound given by the Johnson Bound, the number of decoded polynomials is at most \( pk^2 \). However, the possible values for the data block are at most \( p \). 
    In the worst-case scenario, both schemes might output the entire \( \mathbb{F}_p \) as the candidate list for \( x_i \), rendering the schemes ineffective.


    \vspace{1mm}

    There are other works addressing malicious servers that focus solely on error detection without aiming to retrieve the correct result. Some of these approaches achieve higher efficiency and greater Byzantine server tolerance compared to BRPIR. 
    
  \vspace{2mm}
  \noindent
    {\bf Verifiable PIR (VPIR)} schemes in both the multi-server model \cite{zhang2014verifiable} and the single-server model \cite{zhao2021verifiable}, ensures that the client can identify the byzantine servers, that is, can tell which server is malicious. However, it does not guarantee the recovery of the correct element, or even a small list of potential candidates. The security of VPIR is weaker than that of BRPIR. However, this allows VPIR to accommodate a significantly higher number of malicious servers and greatly reduces communication complexity.

       \vspace{2mm}
  \noindent
    {\bf Error Detecting PIR (EDPIR)} \cite{cao2023committed,colombo2023authenticated,eriguchi2022multi,ke2022two,ke2023private,kruglik2023querying,zhu2022post} schemes allow the client to detect the existence of incorrect answers provided by malicious servers, though they do not guarantee identifying which servers are malicious or reconstructing the correct value.
    \vspace{1mm}    

    In our work, we devise a list decoding method for a variant of Reed-Solomon (RS) codes that consider first-order derivatives. By integrating this list decoding method into the Woodruff-Yekhanin robust PIR scheme, we obtain a list-decodable BRPIR scheme. This variant of RS codes, which incorporates first-order derivatives, can be viewed as a special case of univariate multiplicity codes.

  \vspace{2mm}
  \noindent
    {\bf Univariate Multiplicity Codes}~\cite{guruswami2008explicit} are variants of Reed–Solomon (RS) codes that are constructed by evaluating a polynomial along with all of its derivatives up to order $s$. Notably, RS codes correspond to the special case when $s = 0$. In recent years, there has been significant progress in the list decoding of univariate multiplicity codes~\cite{goyal2024fast,guruswami2011optimal,guruswami2013linear,kopparty2015list,kopparty2023improved}, with the decoding radius---corresponding to the Byzantine tolerance $b/k$ in our framework---reaching up to $1 - R - \epsilon$ for any $\epsilon > 0$, where $R$ is the code rate. This matches the list decoding capacity $1 - R$. 
    However, when restricted to only first-order derivatives and function values, i.e., order-$1$ multiplicity codes, these results no longer apply. We provide a detailed discussion of these limitations in Section~\ref{sec:umc}.


\section{Preliminaries} \label{sec:preliminaries}
{
\subsection{Notation} \label{sec:notation}

We use bold lower-case letters to denote vectors. 
    For any vector \({\bf v}\)  (resp. \({\bf v}_j\)) of length \(m\) and any  $c\in[m]$, we denote by \(v_c\) (resp. \(v_{j,c}\))   the \(c\)-th element of \({\bf v}\) (resp.  \({\bf v}_j\)), meaning that
     \({\bf v}=(v_1,\ldots,v_m)\)  (resp. \({\bf v}_j=(v_{j,1},\ldots,v_{j,m})\)). 
        For any finite set $A$, we denote by $|A|$ the {\em cardinality} of  $A$.
    For any integer $n>0$, we denote   $[n] = \{1,2,\ldots, n\}$ and   $\{a_j\}_{j\in [n]}=\{a_1, a_2, \ldots, a_n\}$.
    For any two   vectors  ${\bf u},{\bf v}$ of the same length, we denote by $\left<{\bf u} , {\bf v} \right>$ the {\em inner product} of ${\bf u}$ and ${\bf v}$.
    For any prime $p$, we denote by $\mathbb{F}_p$ the {\em finite field} of $p$ elements and 
denote by $\mathbb{F}_p^n$ the   set of  all length-$n$ vectors over 
$\mathbb{F}_p$. 
     For any polynomial $f(\lambda)\in \mathbb{F}_p[\lambda]$ and any integer $s\geq 0$, we denote by $f^{(s)}(\lambda)$ the {\em order-$s$ derivative} of $f$ with respect to $\lambda$ and      denote by 
    $f^{(\leq s)}(\lambda)=(f^{(0)}(\lambda), f^{(1)}(\lambda),...,f^{(s)}(\lambda))$ the 
    {\em order-$s$ evaluation} of $f$ at $\lambda$.  In particular,    $f^{(0)}(\lambda)=f(\lambda)$ and 
    $ f^{(1)}(\lambda)=f'(\lambda)$ are the evaluation and  order-$1$ derivative of $f$ at $\lambda$, respectively.

    \vspace{1.5mm}

    We will use the following variables throughout the paper: 
\begin{itemize}
    \item $\ell$: the total number of servers.
    \item $k$: the number of responding servers.
    \item $t$: the number of servers that may collude to learn the retrieval index. 
    \item $b$: the number of servers that may collude to respond incorrectly. 
    \item $n$: the database size.
    \item $p$: the database block size, each entry in the database is an element in $\mathbb{F}_p$.
    \item ${\bf x} = (x_1,\ldots,x_n)$: the database, which is a vector  in $\mathbb{F}_p^n$.
    \item $w$: a degree parameter.
    \item $m$: the least positive integer such that ${m\choose w}\geq n$. 
    \item $C$: a subset of  $\mathbb{F}_p^m$ and a   constant weight code of length $m$ and weight $w$. 
    \item $E$:  a {\em public} $1$-to-$1$ encoding function. 
    \item $F$: an encoding of the database ${\bf x}$ such that $F(E(i)) = x_i$.
\end{itemize}

\subsection{Constant-weight Code} \label{sec:constant-weight code}

A {\em code} $C$ of length $n$ over $\mathbb{F}_p$ is a subset of $\mathbb{F}_p^n$.  
For any two {\em codewords} ${\bf u,v}\in C$,    the {\em Hamming distance} between 
${\bf u,v}$ is the number of coordinates where ${\bf u,v}$ differ and denoted by
$d_H({\bf u,v})=|\{i\in [n]: {\bf u}_i\neq {\bf v}_i\}|$. 
The {\em minimum distance} of  a code $C$ is 
 the least Hamming distance between any two different
codewords in $C$ and denoted by  
 $d_H(C)=\min_{{\bf u,v}\in C, {\bf u}\neq {\bf v}}
d_H({\bf u,v})$. 
For any codeword ${\bf u}\in C$,   the Hamming  weight of $\bf u$ is
the  number nonzero coordinates of
${\bf u}$ and denoted by ${\sf wt}({\bf u})=d_H({\bf u}, {\bf 0})$.

    For any integers $m,d,w>0$, an $(m,d,w)$ {\em constant-weight code} \cite{bose1982theory} is a binary code of length $m$, minimum Hamming distance $d$, and Hamming weight $w$.
    The {\em maximum} size of an $(m,d,w)$ constant weight code is denoted by
    $A(m,d,w)$. 
    For any integers $m,\delta,w>0$, Agrell, Vardy and Zeger\cite{agrell2000upper} showed 
    an   upper bound on  $A(m,2\delta,w)$.
    \begin{theorem} \label{theo:constant-weight-code}
        {\em (Agrell, Vardy and Zeger \cite{agrell2000upper}, Theorem 12)}
        $$ 
            A(m,2\delta,w)\leq \frac{{m \choose w - \delta + 1}}{{w \choose w - \delta + 1}}.
        $$
    \end{theorem}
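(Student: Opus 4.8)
The plan is to prove this by a double-counting (packing) argument: identify each codeword with the $w$-element subset of $[m]$ given by its support, use the minimum-distance hypothesis to bound how much two such supports can overlap, and then count $(w-\delta+1)$-subsets of $[m]$. I write $r = w-\delta+1$ throughout and assume the nondegenerate range $1 \le \delta \le w$ (so $1 \le r \le w \le m$); outside it the stated inequality is either vacuous or forces $|C| \le 1$ directly.

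\emph{Step 1 (from distance to intersection).} Let $C \subseteq \{0,1\}^m$ be an $(m,2\delta,w)$ constant-weight code and, for each $\mathbf{u} \in C$, put $S_\mathbf{u} = \{i \in [m] : u_i = 1\}$, a $w$-set. For distinct $\mathbf{u},\mathbf{v} \in C$, since both vectors have weight $w$ we have the identity $d_H(\mathbf{u},\mathbf{v}) = 2w - 2\,|S_\mathbf{u} \cap S_\mathbf{v}|$, so $d_H(\mathbf{u},\mathbf{v}) \ge 2\delta$ forces $|S_\mathbf{u} \cap S_\mathbf{v}| \le w - \delta$. \emph{Step 2 (packing property).} Let $\mathcal{T}$ be the collection of all $r$-subsets of $[m]$, so $|\mathcal{T}| = \binom{m}{w-\delta+1}$; I claim no $T \in \mathcal{T}$ is contained in two distinct supports $S_\mathbf{u}, S_\mathbf{v}$, since that would give $|S_\mathbf{u}\cap S_\mathbf{v}| \ge |T| = w-\delta+1 > w-\delta$, contradicting Step 1. \emph{Step 3 (count incidences).} Each support $S_\mathbf{u}$ contains exactly $\binom{w}{w-\delta+1}$ members of $\mathcal{T}$, while by Step 2 each member of $\mathcal{T}$ lies in at most one support; counting the pairs $(\mathbf{u},T)$ with $T \subseteq S_\mathbf{u}$ in two ways gives $|C|\cdot\binom{w}{w-\delta+1} \le \binom{m}{w-\delta+1}$, and dividing by $\binom{w}{w-\delta+1} > 0$ yields the bound.

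There is essentially no hard step here; the only real content is recognizing the translation in Step 1 and the resulting packing statement in Step 2, after which the bound falls out of a single two-way count. The points needing a little care are pure bookkeeping: verifying the identity $d_H(\mathbf{u},\mathbf{v}) = 2w - 2|S_\mathbf{u}\cap S_\mathbf{v}|$ for equal-weight binary vectors, and dismissing the degenerate parameter cases ($\delta = 0$, or $w-\delta+1$ outside $[0,m]$) in which the stated right-hand side is trivial or vacuous. Accordingly, I expect the main (and still modest) obstacle to be formulating cleanly the claim that every $(w-\delta+1)$-subset of $[m]$ is contained in at most one codeword's support — this is the step that converts the metric condition into a combinatorial packing bound and carries the whole argument.
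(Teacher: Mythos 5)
Your proof is correct. A small clarification first: the paper does not prove this statement — it cites it as Theorem~12 of Agrell, Vardy, and Zeger~\cite{agrell2000upper} — so there is no in-paper argument to compare against. Your double-counting/packing argument is the standard elementary derivation of this bound. Each step checks out: the identity $d_H(\mathbf{u},\mathbf{v}) = 2w - 2|S_\mathbf{u}\cap S_\mathbf{v}|$ for equal-weight binary vectors is exactly right (both supports have size $w$, so the symmetric difference has size $2w - 2|S_\mathbf{u}\cap S_\mathbf{v}|$); the distance condition therefore bounds the intersection by $w-\delta$; consequently no $(w-\delta+1)$-subset of $[m]$ can lie in two distinct supports; and the incidence count $|C|\binom{w}{w-\delta+1}\le\binom{m}{w-\delta+1}$ gives the claim. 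Your attention to the degenerate parameter ranges ($\delta=0$ making the denominator vanish, $\delta>w$ making the distance unachievable so $|C|\le 1$) is appropriate and correctly handled. For context, Agrell--Vardy--Zeger derive this bound within a broader framework, but the self-contained packing argument you give is the most direct route to exactly this inequality.
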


\subsection{Univariate Multiplicity Codes and their List-decoding Algorithms} \label{sec:umc}
    For any integer $s\geq 0$, an \emph{order-\(s\) univariate multiplicity code} \(C\) of length \(k\) for degree-\(w\) polynomials over \(\mathbb{F}_p\) is an error-correcting code that encodes each polynomial \(f(\lambda)\in\mathbb{F}_p[\lambda]\) with \(\deg f\le w\) into the codeword
    \[
      C_f=\bigl(f^{(\le s)}(\lambda_j)\bigr)_{j=1}^k,
    \]
    where \(\lambda_1,\dots,\lambda_k\) are pairwise distinct elements of \(\mathbb{F}_p\).  In particular, the renowned Reed–Solomon codes arise when \(s=0\).  In this work, we specialize to \(s=1\).

    Given a set of $k$ tuples $\{(\lambda_j, \alpha_{0,j}, \ldots, \alpha_{s,j})\}_{j=1}^k$, a 
    {\em list decoding} algorithm that corrects $b$ errors for a univariate multiplicity code $C$ may  identify all polynomials $\tilde{f}(\lambda)$ of degree  $\leq w$ such that $\tilde{f}^{(\leq s)}(\lambda_j) = (\alpha_{0,j}, \ldots, \alpha_{s,j})$ for at least $k-b$ distinct indices $j\in [k]$ and  outputs the list of all such polynomials. 
    If the size of the list is at most $L$, then the code is called {\em $L$-list decodable}.

    For Reed-Solomon codes (i.e., $s=0$), Sudan \cite{sudan1997decoding} has a list decoding algorithm that interpolates a bivariate polynomial 
    \begin{align}\label{eqn:sudan}
            Q(\lambda, \alpha) = \sum_{c= 0}^{D/w} Q_{c}(\lambda) \alpha^c
    \end{align}
  of $(1, w)$-weighted degree  $\leq D$
  by solving  $k$   constraints  of the form 
  $$Q(\lambda_j, \alpha_{0,j}) = 0, j\in [k],$$
  and then outputs a list of candidate polynomials
  $\tilde{f}(\lambda)$ such that 
  \( (\alpha - \tilde{f}(\lambda)) |Q(\lambda,\alpha),\)
  where  $D$ is a carefully chosen parameter and 
 $\deg(Q_c(\lambda))\leq D-cw$ for all $0\leq c \leq D/w$.

Recent  list decoding algorithms
\cite{goyal2024fast,guruswami1998improved,guruswami2011optimal,guruswami2013linear,kopparty2015list,kopparty2023improved} for the general order-$s$ univariate multiplicity codes 
extended the basic idea \cite{sudan1997decoding} of 
{\em interpolating} a multivariate 
polynomial $Q$ and then {\em factoring} $Q$ to find  all candidate polynomials
$\tilde{f}(\lambda)$. 
An early list-decoding algorithm of univariate multiplicity codes given by Guruswami and Wang \cite{guruswami2011optimal,guruswami2013linear} considered 
\begin{align} \label{eqn:gw}
        Q(\lambda, \alpha_0, \ldots, \alpha_r) = P(\lambda) + \sum_{c=0}^{r} Q_c(\lambda)\alpha_c,
\end{align}
  an  $(r+2)$-variate polynomial   of degree $\leq D$, where $r \leq s$ and $D$ are carefully chosen parameters, 
 \( P(\lambda) \) is of   degree   \( \leq D \), and  \( Q_c(\lambda) \) is of
   degree   \(\leq  D-w+1 \) for all \( c \in \{0, \ldots, r\} \). 
In particular, the polynomial $Q$ may be viewed as a function of 
$\lambda$ if $\alpha_0,\ldots,\alpha_s$ are all viewed as  functions of $\lambda$. They constructed $s-r$ new polynomials $\{\mathcal{D}^{c}Q(\cdot)\}_{c=1,\ldots, s-r}$ from $Q$ by successively applying a special  operator
  $\mathcal{D}$ that essentially differentiates   $Q$ with respect to $\lambda$ using the well-known
   chain rule but requires that
  $\mathcal{D}(\alpha_c)=\alpha_{c+1}$ (instead of $\mathcal{D}(\alpha_c)=\alpha_c'$) for all $c=0,1,\ldots,s-1$. 
They developed    $(s-r + 1)k$ constraints that
\[
    Q(\lambda_j, \alpha_{0,j}, \ldots,\alpha_{r,j}) = 0 ~~\text{and}~~\mathcal{D}^{c}Q(\lambda_j, \alpha_{0,j}, \ldots, \alpha_{s,j}) = 0,~c \in [s-r], j\in[k].
\]
In their construction, the parameter \(D\) is prescribed as
\[
  D = \left\lfloor \frac{k(s - r + 1) - w}{r + 1} \right\rfloor,
\]
which provides resilience against up to
$
  b = k - \left\lfloor \frac{D + w}{s - r + 1} \right\rfloor
$
errors.
The advantage of this construction lies in its ability to achieve a trade‐off between computational complexity and error tolerance by appropriately choosing the parameters $r$ and $D$. However, in the special case $s=1$, which is crucial for our construction of list‐decodable BRPIR schemes, it can correct at most $b<k/2$ errors, regardless of the choice of $r = 0$ or $ r = 1$.
}

The recent works in~\cite{goyal2024fast,kopparty2023improved} establish that order-$s$ univariate multiplicity codes of length $k$, encoding degree-$w$ polynomials over $\mathbb{F}_p$, are $L$-list decodable from up to $b = \left(1 - \frac{w}{sk} - \epsilon\right)k$ errors, where $L = \left(\frac{1}{\epsilon}\right)^{O\left(\frac{1}{\epsilon} \log \frac{1}{\epsilon}\right)}$ and $\epsilon \geq \sqrt{16/s}$.
When $s$ is large, the list decoding error tolerance $b/k$ of this scheme approaches the capacity $1 - R$, where $R = w/sk$ is the code rate. However, for small values of $s$, such as $s = 1$, the constraint $\epsilon > 1$ renders this result inapplicable.

{
\subsection{Woodruff-Yekhanin RPIR Scheme} \label{sec:Basic Unique decoding PIR of woodruff and Yekhanin}

    \begin{figure*}[htbp!]
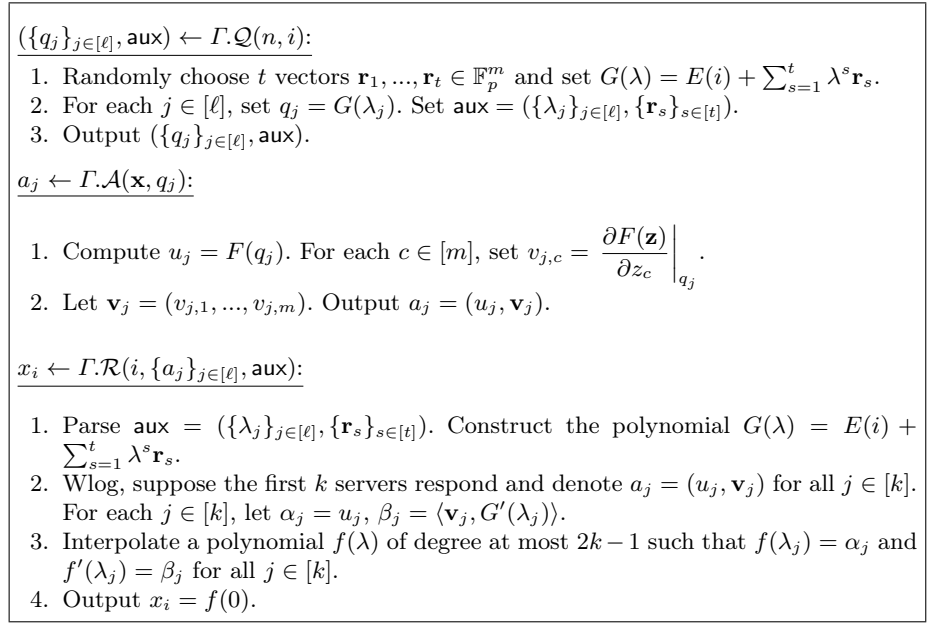

    \begin{center}
		\begin{boxedminipage}{\textwidth}
\vspace{2mm}
\underline{$(\{q_j\}_{j\in [\ell]}, {\sf aux}) \leftarrow \Gamma.\mathcal{Q}(n, i)$:}
\vspace{0.5mm}
\begin{itemize}
    \item[1.] Randomly choose $t$ vectors ${\bf r}_{1},...,{\bf r}_{t} \in \mathbb{F}_p^m$ and set $G(\lambda) = E(i) + \sum_{s = 1}^{t} \lambda^s {\bf r}_s$.

    \item[2.] For each $j \in [\ell]$, set $q_{j} = G(\lambda_{j})$. Set ${\sf aux} = ( \{\lambda_{j}\}_{j\in [\ell]}, \{{\bf r}_s\}_{s\in [t]} )$.

    \item[3.] Output $(\{q_j\}_{j\in [\ell]}, {\sf aux})$.
\end{itemize}

\vspace{2mm}
\underline{$a_j \leftarrow \Gamma.\mathcal{A}({\bf x}, q_j)$:}
\vspace{0.5mm}
\begin{itemize}
    \item[1.] Compute $u_j = F(q_j)$. For each $c \in [m]$, set
    $\displaystyle v_{j, c} = \left. \frac{ \partial F({\bf z}) }{\partial z_{c} }\right|_{q_j}$.

    \item[2.] Let ${\bf v}_j = (v_{j,1}, ..., v_{j,m})$. Output $a_j = (u_j, {\bf v}_j)$.
\end{itemize}

\vspace{2mm}
\underline{$x_i \leftarrow \Gamma.\mathcal{R}(i, \{a_j\}_{j\in [\ell]}, {\sf aux})$:}
\vspace{0.5mm}
\begin{itemize}
    \item[1.] Parse ${\sf aux} = (\{\lambda_j\}_{j\in [\ell]}, \{{\bf r}_s\}_{s\in [t]})$. Construct the polynomial $G(\lambda) = E(i) + \sum_{s = 1}^{t} \lambda^s {\bf r}_s$.

    \item[2.] Wlog, suppose the first $k$ servers respond and denote $a_j = (u_j, {\bf v}_j)$ for all $j \in [k]$.  
    For each $j \in [k]$, let $\alpha_j = u_j$, $\beta_j = \left< {\bf v}_{j}, G'(\lambda_j) \right>$.

    \item[3.] Interpolate a polynomial $f(\lambda)$ of degree at most $2k-1$ such that $f(\lambda_j)= \alpha_j$ and $f'(\lambda_j) = \beta_j$ for all $j \in [k]$.

    \item[4.] Output $x_i = f(0)$.
\end{itemize}

            \end{boxedminipage}
	\end{center}
	\caption{Woodruff-Yekhanin $(t,k,\ell)$-RPIR scheme $\Gamma$.}
	\label{fig:WY PIR}
\end{figure*}

    Woodruff and Yekhanin    \cite{woodruff2005geometric}  constructed a $t$-private $k$-out-of-$\ell$ RPIR scheme, referred to as a $(t,k,\ell)$-RPIR scheme, $\Gamma$ (see {\bf Fig.} \ref{fig:WY PIR}) with communication complexity 
    $
        \mathcal{O}(\frac{k \ell}{t} \log \ell \cdot n^{1/ \lfloor(2k-1)/t \rfloor }),
    $
    where $n$ is the size of the database ${\bf x}=(x_1,\ldots,x_n)$. 

 For a degree parameter $w$ defined as     
 \begin{equation} \label{eq: wy w}
     w =  \lfloor (2k-1)/t \rfloor,
 \end{equation}
 their scheme   chooses  an integer $ m = O(w n^{1/ w})$ such that
  $        {m \choose w} \geq n,$
  encodes the indices of the $n$ database elements as binary vectors of length $m$ and weight 
  $w$ with a {\em public} $1$-to-$1$ function
    \[
        E:[n] \to \{0,1\}^m,
    \]
    and represents the database  ${\bf x}$ as 
    \begin{equation} \label{eqn:Fx}
        F(\textbf{z}) = F(z_1,\ldots, z_m) = \sum_{j=1}^{n} x_j \cdot \prod_{c:E(j)_c = 1} z_{c},
    \end{equation}
    a homogeneous $m$-variate polynomial of degree $w$ that satisfies 
    \begin{equation}
        F(E(i)) = x_i
    \end{equation}
    for all $i\in [n]$. The client reduces the problem of privately retrieving $x_i$ from $\ell$ servers to the problem of privately evaluating $F(E(i))$ with the $\ell$ servers.
    To this end, a prime $p > \ell$ is chosen and $F$ is interpreted as a polynomial over the finite field $\mathbb{F}_p$.
    For each $j\in [\ell]$, the $j$-th server is associated with a nonzero field elements $\lambda_j$. In particular, the field elements $\{\lambda_j\}_{j\in [k]}$ are distinct and can be made public.
To retrieve $x_i$, the client chooses $t$ vectors ${\bf r}_1,{\bf r}_2,\ldots,{\bf r}_t \in \mathbb{F}_p^m$ uniformly at random and generates a vector  
    \begin{equation}
        \begin{split}
            G(\lambda) = E(i) + \sum_{s = 1}^{t} \lambda^{s} {\bf r}_{s}
        \end{split}
    \end{equation}
  of   $m$ polynomials of degree $t$. 
It   sends a query 
$$q_j = G(\lambda_j)$$ to the $j$-th server for all 
  $j\in [\ell]$ and keeps the information  ${\sf aux} = ( \{\lambda_j\}_{j\in [\ell]}, \{{\bf r}_s\}_{s \in [t]} )$ for later use. The $j$-th server  is expected to 
  compute    $u_j = F(q_j)$ and ${\bf v}_j=(v_{j,1},\ldots,v_{j,m})$, where 
    \begin{equation} \nonumber
        \begin{split}
            &v_{j, c}=  \left.\frac{\partial F({\bf z})}{\partial z_{c}} \right|_{q_j}
        \end{split}
    \end{equation}
    for all $c \in [m]$, and reply with 
    $$a_j = (u_j, {\bf v}_j ).$$ 
    Wlog, suppose the first $k$ servers respond, the client  interpolates  the univariate polynomial 
    \begin{equation} \label{eqv1}   
        f(\lambda)  = F(G(\lambda))
    \end{equation}
    of degree $ wt\leq 2k-1$ from the $2k$ values 
    \[
        f(\lambda_j) = u_j, ~~
        f'(\lambda_j) =  \left< {\bf v}_j, G'(\lambda_j) \right>, ~~j\in [k]    \]
    and outputs $f(0)$, which is equal to $x_i$ as
    \[
        x_i =F(E(i))= F(G(0))=f(0). 
    \]

 The scheme is $t$-private such that any $t$ colluding servers learn no information about
 $i$ (or equivalently $E(i)$), because every element of 
 $E(i)$ is secret-shared among the servers with
 Shamir's $t$-private threshold secret sharing scheme \cite{shamir1979share}, using the vector
 $G(\lambda)$ of $m$ random polynomials. 
    
        
Note that the client sends a length-$m$ vector in $\mathbb{F}_p$ to each server and each server returns a length-$(m + 1)$ vector in $\mathbb{F}_p$.
Given that $m = O(w n^{1/w})$, if   the  prime $p$ is chosen    such that $\ell < p \leq 2\ell$, then the 
 communication complexity of this scheme is      $(2m+1)\ell \log p=O(\frac{k \ell \log \ell}{t} n^{1/\lfloor \frac{2k-1}{t}\rfloor} )$.
    
In the default configuration of  \cite{woodruff2005geometric},     $\lambda_j$ is set to $j$
 for all $j\in[\ell]$ and eliminates the necessity   to  include $\{\lambda_j\}_{j=1}^\ell$ in ${\sf aux}$.

\section{List-Decodable BRPIR model}
    In this section, we formally define a model for  list-decodable BRPIR, which 
     generalizes the standard  BRPIR  of \cite{beimel2002robust} by allowing 
     the reconstructing algorithm to output a list that contains the data item being  
     retrieved.

Similar to the standard BRPIR,  the list-decodable BRPIR allows   a client to 
retrieve a data item from a database replicated among multiple servers, even if some of
the servers are silent or malicious, where the silent servers simply 
  fail to respond and the malicious servers  respond incorrectly.
  While the standard BRPIR may enable 
  correct retrieval when a minority of the responding servers are malicious, what really differentiates 
  our list decodable BRPIR from the standard BRPIR
  is its ability to handle the much trickier case that a majority of the responding servers are
  malicious and thus  offer  stronger  robustness for more adversarial environments.
 
Informally, an $L$-list-decodable  $b$-Byzantine-robust $t$-private $k$-out-of-$\ell$ PIR scheme  is a  protocol between  $\ell$ servers $\{\mathcal{S}_j\}_{j\in [\ell]}$, each storing  a copy of the same database 
 ${\bf x} = (x_1,\ldots,x_n)\in \mathbb{F}_p^n$, and a client $\mathcal{C}$ that is interested in 
    a block  $x_{i}$ of the database. 
    It allows the client to output a list of size $\leq L$ that contains 
    $x_i$, as long as at least $k$ out of the   $\ell$ servers   respond and
     at most $b$ out of the  responding servers provide incorrect answers; and 
     
     \begin{definition}[{\bf List-decodable BRPIR}]
     \it
        An {\em $L$-list-decodable   $b$-Byzantine-robust $t$-private $k$-out-of-$\ell$ PIR} scheme $\Gamma = (\mathcal{Q}, \mathcal{A}, \mathcal{R})$ for a client $\mathcal{C}$ and 
          $\ell$ servers $\mathcal{S}_1,\ldots,\mathcal{S}_\ell$ consists of three algorithms that can be described as follows:
        \begin{itemize}
            \item $(\{q_j\}_{j \in [\ell]}, {\sf aux})\leftarrow \mathcal{Q}(n,i)$:
                This is a randomized {\em querying algorithm} for the client $\mathcal{C}$. It takes the database size $n$ and a retrieval index $i\in [n]$ as input, and outputs $\ell$ queries $\{q_j\}_{j \in [\ell]}$, along with an auxiliary information ${\sf aux}$.
                For each $j\in [\ell]$, the query $q_j$ will be sent to the server $\mathcal{S}_j$. The auxiliary information ${\sf aux}$ will be used later by the client
           for reconstruction. 

            \item $a_j \leftarrow \mathcal{A}({\bf x},q_j)$:
                This is a deterministic {\em answering algorithm} for the server $\mathcal{S}_j$ $(j\in[\ell])$. It takes a database ${\bf x} = (x_1,\ldots,x_n)$ and the query $q_j$ as input and outputs an answer $a_j$.

            \item ${\sf output\_list} \leftarrow \mathcal{R}(i, \{a_j\}_{j \in [\ell]}, {\sf aux})$:
                This is a deterministic {\em reconstructing algorithm} for the client $\mathcal{C}$. It uses the retrieval index $i$, the answers $\{a_j\}_{j \in [\ell]}$ and the auxiliary information ${\sf aux}$ to reconstruct $x_i$, where $a_j$ is set to 
                    $\perp$ if a server $\mathcal{S}_j$ does not respond. The output is a list of
                    size at most $L (= k^{O(1)})$   that 
                    contains the queried data block $x_i$. 
        \end{itemize}

 \noindent
 {\bf Correctness.}  
 Informally, 
the scheme $\Gamma$ is considered  {\em correct} if
the ${\sf output\_list}$ generated by the reconstructing algorithm $\mathcal{R}$ is always  of size 
  $\leq L$ and includes the target block $x_i$, provided that at least $k$ out of the $\ell$ servers respond 
  and   at most $b$ of the responses are incorrect. 
Formally, the scheme $\Gamma$ is   {\em correct} if for any $n$, any ${\bf x} \in \mathbb{F}_p^n$, any $(\{q_j\}_{j\in [\ell]}, {\sf aux}) \leftarrow \mathcal{Q}(n,i)$,   any  answers $\{a_j\}_{j\in [\ell]}$ such that 
        \begin{align*}
            \left|\{ j \in [\ell]: a_j \neq \perp\}\right| &\geq k, \\
            \left|\{  j \in [\ell]:  a_j \not\in \{\perp, \mathcal{A}({\bf x}, q_j)\} \right| &\leq b, 
        \end{align*}
and any $
            {\sf output\_list} \leftarrow \mathcal{R}(i, \{a_j\}_{j \in [\ell]}, {\sf aux}),
        $
it holds that 
        \begin{align}\label{eqn:cts}
         \Pr\big[  ( |{\sf output\_list}| \leq L) \wedge (x_i \in {\sf output\_list})\big]=1.
        \end{align}

\noindent
{\bf $t$-Privacy}.
Informally, the  scheme $\Gamma$ is considered   {\em $t$-private} if any 
collusion of  $\leq t$ servers learns  no information about the client's retrieval index $i$.
Formally, the  scheme $\Gamma$ is {\em $t$-private} if for any $n$, any $i_1, i_2 \in [n]$, and any set $T \subseteq [\ell]$ of size  $ \leq t$, the distributions of $\mathcal{Q}_T(n,i_1)$ and $\mathcal{Q}_T(n,i_2)$ are identical, where $\mathcal{Q}_T$ denotes the concatenation of the $j$-th output of  $\mathcal{Q}$ for all $j\in T$, i.e.,  $\{q_j\}_{j \in T}$

\end{definition}

\noindent{\bf Remark 1.}
Our correctness property requires the  scheme $\Gamma$ to   satisfy 
Eq.  (\ref{eqn:cts}). We call this kind of correctness  {\em perfect}.  
By contrast,  several existing BRPIR schemes \cite{devet2012optimally} may satisfy a relaxed correctness property
that guarantees  
        \begin{align} 
         \Pr\big[  ( |{\sf output\_list}| \leq L) \wedge (x_i \in {\sf output\_list})\big]\geq 1-\epsilon
        \end{align}
        for a very small number $\epsilon$. 
When the failure probability  $\epsilon$ is sufficiently small (e.g., negligible in the number of servers), the scheme is statistically reliable and we call this kind of correctness 
 {\em statistical}. 
 
 \vspace{1mm}
 \noindent{\bf Remark 2.}
 For the ease of exposition, hereafter we denote 
any  $L$-list-decodable   $b$-Byzantine-robust $t$-private $k$-out-of-$\ell$ PIR  scheme
  by \underline{$(L,b,t,k,\ell)$-ldBRPIR}.
By default, when $t=1$ we   simply denote any $(L,b,1,k,\ell)$-ldBRPIR as
$(L,b,k,\ell)$-ldBRPIR and refer to the   property of $1$-privacy as privacy.

    \vspace{1mm}

The efficiency of an $(L,b,t,k,\ell)$-ldBRPIR scheme is mainly measured by its
 communication complexity, which is 
  the  average number of bits that have to be communicated between the client and all servers, in order to retrieve one bit from the database. 
    \begin{definition}[{\bf Communication Complexity}]
    \em
        The {\em communication complexity} of the scheme $\Gamma$, denoted by ${\sf CC}_{\Gamma}(n)$, is defined as the average number of bits communicated per bit retrieved between the client and all servers, maximize over the choices of both the database 
         ${\bf x} \in \mathbb{F}_{p}^{n}$ and the retrieval index $i \in [n]$, i.e., 
        \begin{align*}
            {\sf CC}_{\Gamma}(n) = \max_{{\bf x},i} \left( \frac{1}{\lceil \log_2 p \rceil} \sum_{j=1}^{\ell}(|q_j| + |a_j|) \right).
        \end{align*}
    \end{definition}

\section{Perfect ldBRPIR  based on Overinterpolation}
\label{sec:construction1}
    In this section, we construct a perfectly correct $(L,b,t,k,\ell)$-ldBRPIR scheme $\Gamma_1$ 
  with   list size \(L = k^{O(1)}\) and   
      Byzantine  robustness   \(b \leq k-2\).
      The proposed scheme is most  suitable for a small number of servers and 
      comparably as efficient as 
  a \(t\)-private \((k-b)\)-out-of-\(\ell\) RPIR scheme.

  \begin{figure*}[htbp!]
  
        \begin{center}
		\begin{boxedminipage}{\textwidth}
			\underline{${\sf output\_list} \leftarrow \Gamma_1. \mathcal{R}(i, \{a_j\}_{j\in [\ell]}, {\sf aux})$:}
			\vspace{0.5mm}                
                \begin{itemize}
                    \item[1.]   
                    Parse ${\sf aux} = (\{\lambda_j\}_{j\in [\ell]}, \{{\bf r}_s\}_{s\in [t]})$. Construct the polynomial $G(\lambda) = E(i) + \sum_{s = 1}^{t} \lambda^s {\bf r}_s$.

                    \item[2.]
                    Wlog,  suppose the first $k$ servers respond and 
                     denote   $a_j = (u_j, {\bf v}_j )$ for all $j\in [k]$. 
                    For each $j\in [k]$, let $\alpha_j = u_j, \beta_j = \left< {\bf v}_{j}, G'(\lambda_j) \right> $.              
                    
                    \item[3.] 
                        Initialize a  set ${\sf cp}=\emptyset$ to store the candidate polynomials.  For any  set $H\subseteq [k]$ of cardinality $k-b$, 
                        \begin{itemize}
                            \item[(3.1)] Interpolate a polynomial $f_H(\lambda)$ such that for each $j\in H$, $f_H(\lambda_j) = \alpha_j, f_H'(\lambda_j) = \beta_j$. 

                            \item[(3.2)] If $f_H(\lambda)$ is of degree at most $wt$, add $f_H(\lambda)$ to the set   ${\sf cp}$.
                        \end{itemize}
                        \item[4.] Output ${\sf output\_list} = \{f_H(0):f_H(\lambda)\in {\sf cp}\}$.
                \end{itemize}
		\end{boxedminipage}
	\end{center}
	\caption{Reconstructing algorithm $\Gamma_1.\mathcal{R}$ of the $(L,b,t,k,\ell)$-ldBRPIR scheme $\Gamma_1$.}
	\label{fig:construction1}
\end{figure*}

\subsection{The Construction}
    The scheme $\Gamma_1$ retains the querying and answering algorithms of the Woodruff-Yekhanin RPIR scheme $\Gamma$, $(\Gamma_1.\mathcal{Q} ,\Gamma_1.\mathcal{A}) = (\Gamma.\mathcal{Q}, \Gamma.\mathcal{A})$.  
    The key differences between $\Gamma_1$ and $\Gamma$ are as follows:  
    \textbf{(I)} The degree parameter $w$ is required to satisfy Eq.~\eqref{eq: wy w} in $\Gamma$, whereas in $\Gamma_1$ it is required to satisfy Eq.~\eqref{eqn:w}.  
    \textbf{(II)} The reconstructing algorithm $\Gamma.\mathcal{R}$ in $\Gamma$ is replaced by the reconstructing algorithm $\Gamma_1.\mathcal{R}$ in $\Gamma_1$, as shown in \textbf{Fig}~\ref{fig:construction1}.

The scheme $\Gamma_1$ achieves the expected list size and Byzantine robustness by invoking a novel reconstructing algorithm, which interpolates the polynomial $f(\lambda)=F(G(\lambda))$ of Eq.~(\ref{eqv1}) with more server responses than necessary. More precisely, upon $k$ out of $\ell$ servers responding, the client interpolates $f(\lambda)$ of degree-$wt(<2(k-b)-1)$ with any $k-b$ of the responses, determines whether the interpolated polynomial is a possible candidate of $f(\lambda)$, and includes it into ${\sf output\_list}$ when it is indeed a possible candidate. 
As the crux of this idea, we require 
\begin{align} \label{eqn:deg}
\deg(f(\lambda))<2(k-b)-1
\end{align}
 such that  the $k-b$ evaluations 
and $k-b$ derivatives deduced from the 
$k-b$ server responses are more than necessary and thus enable us to
determine whether each interpolated polynomial is indeed 
a possible candidate with its degree. 
We refer to this technique of using more-than-necessary points to do polynomial interpolation
as 
{\em overinterpolation}. 
In particular, for given $k$ and $b$, the inequality (\ref{eqn:deg}) is met by
choosing a 
  weight 
    parameter $w = O(1)$ in Woodruff-Yekhanin RPIR scheme (see Section \ref{sec:Basic Unique decoding PIR of woodruff and Yekhanin}) such that  
    \begin{align}\label{eqn:w}
    w<\left\lfloor \frac{2(k-b)-1}{t}\right\rfloor. 
    \end{align}
   
    The main observation about our overinterpolation technique includes: (1) 
whenever some of the $k-b$ server responses under consideration 
are incorrect, the interpolated polynomial will be of degree
$>\deg(f(\lambda))$ with very large probability and thus 
be ruled out; and (2) whenever the server responses are all correct, the interpolated 
polynomial will be   of degree $\deg(f(\lambda))$ with probability 1 and thus 
appear in   ${\sf output\_list}$. 
In other words, the degree of $f(\lambda)$ serves as a {\em limit} and 
gives a  filtering process that effectively eliminates numerous erroneous results while preserving the correct one. 
    By adjusting   this  limit, we can regulate the number of polynomials that pass the filtering process.
     When the  limit  is set to an optimal value, the number of polynomials that pass the filtering process
     will be   \( k^{O(1)} \), thereby achieving the expected 
      list size for any admissible    Byzantine  robustness parameter $b$.

%
%

\subsection{Analysis}

In this section, we show that the proposed scheme $\Gamma_1$ is indeed an
$(L,b,t,k,\ell)$-ldBRPIR scheme for  \( L = k^{O(1)} \) and $b\leq k-2$. 

\vspace{2mm}
\noindent
{\bf Correctness.} 
To show that the proposed scheme is correct, we start with a technical lemma that gives an
  upper bound  on the number of the points where two distinct  degree $d$ polynomials have the
    same  order-1 evaluations.
    \begin{lemma} \label{lemma2}
Suppose that $f_1(\lambda)$ and $f_2(\lambda)$ are two  distinct polynomials of degree
$d$ over a finite field $\mathbb{F}_p$. Then there exist at most $\lfloor \frac{d}{2} \rfloor$ field elements $\lambda_j$ such that $f_1(\lambda_j) = f_2(\lambda_j)$ and $f'_1(\lambda_j) = f'_2(\lambda_j)$. 
    \end{lemma}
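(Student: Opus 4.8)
The plan is to consider the difference polynomial $g(\lambda) = f_1(\lambda) - f_2(\lambda)$, which is nonzero (since $f_1 \neq f_2$) and has degree at most $d$. The key observation is that if $f_1(\lambda_j) = f_2(\lambda_j)$ and $f_1'(\lambda_j) = f_2'(\lambda_j)$ at some field element $\lambda_j$, then $g(\lambda_j) = 0$ and $g'(\lambda_j) = 0$, which means $\lambda_j$ is a root of $g$ of multiplicity at least $2$; equivalently, $(\lambda - \lambda_j)^2$ divides $g(\lambda)$ in $\mathbb{F}_p[\lambda]$. (One should be slightly careful here: over a field of positive characteristic, a double root in the sense of $g(\lambda_j) = g'(\lambda_j) = 0$ still corresponds to $(\lambda-\lambda_j)^2 \mid g$, since this is an algebraic divisibility statement verifiable by writing $g$ in the basis $\{(\lambda-\lambda_j)^i\}$ — the characteristic-$p$ subtleties only arise for multiplicities $\geq p$, which is not relevant here.)

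From there, if there were $r$ distinct such field elements $\lambda_1, \dots, \lambda_r$, then $\prod_{j=1}^r (\lambda - \lambda_j)^2$ would divide $g(\lambda)$, forcing $2r \leq \deg g \leq d$, hence $r \leq \lfloor d/2 \rfloor$. This gives the claimed bound. I would write this up in a few lines: introduce $g$, note it is nonzero of degree $\le d$, argue each good point contributes a squared factor, multiply the squared factors (using that distinct $\lambda_j$ give coprime factors $(\lambda - \lambda_j)^2$), and compare degrees.

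The only genuine subtlety — and the one worth a sentence of care — is the claim that $g(\lambda_j) = g'(\lambda_j) = 0$ implies $(\lambda - \lambda_j)^2 \mid g(\lambda)$ even when $\mathrm{char}(\mathbb{F}_p) = p$ is small. I would handle this by writing $g(\lambda) = \sum_{i \ge 0} c_i (\lambda - \lambda_j)^i$ (Taylor-type expansion in the shifted variable, which is purely formal and valid over any commutative ring): then $g(\lambda_j) = c_0$ and $g'(\lambda_j) = c_1$, so both vanishing gives $c_0 = c_1 = 0$, i.e. $(\lambda - \lambda_j)^2 \mid g$. This avoids any reliance on the formal-derivative-to-multiplicity correspondence breaking down, since we only ever use orders $0$ and $1$. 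So I do not expect a real obstacle here; the proof is short. The main thing to get right is simply stating the multiplicity argument cleanly and invoking coprimality of the distinct linear factors to combine them.

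\begin{proof}
Let $g(\lambda) = f_1(\lambda) - f_2(\lambda)$. Since $f_1 \neq f_2$, the polynomial $g$ is nonzero, and $\deg g \le d$. Suppose $\lambda_1, \ldots, \lambda_r$ are pairwise distinct field elements with $f_1(\lambda_j) = f_2(\lambda_j)$ and $f_1'(\lambda_j) = f_2'(\lambda_j)$ for all $j \in [r]$. Then $g(\lambda_j) = 0$ and $g'(\lambda_j) = 0$ for all $j \in [r]$. Fix $j$ and write $g(\lambda) = \sum_{i \ge 0} c_i (\lambda - \lambda_j)^i$ for suitable $c_i \in \mathbb{F}_p$; this expansion is purely formal and valid over any field. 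Then $c_0 = g(\lambda_j) = 0$ and $c_1 = g'(\lambda_j) = 0$, so $(\lambda - \lambda_j)^2$ divides $g(\lambda)$. Since the $\lambda_j$ are pairwise distinct, the polynomials $(\lambda - \lambda_1)^2, \ldots, (\lambda - \lambda_r)^2$ are pairwise coprime, and hence their product $\prod_{j=1}^r (\lambda - \lambda_j)^2$ divides $g(\lambda)$. Comparing degrees gives $2r \le \deg g \le d$, so $r \le \lfloor d/2 \rfloor$.
\end{proof}
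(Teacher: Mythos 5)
Your proof is correct and follows essentially the same route as the paper's: form the nonzero difference polynomial $g = f_1 - f_2$ of degree at most $d$, observe that each agreement point gives a double root so $(\lambda-\lambda_j)^2 \mid g$, and compare degrees. The paper phrases it as a contradiction while you argue directly, and you add a welcome sentence justifying the double-root claim in positive characteristic via the formal Taylor expansion, but the core argument is identical.
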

    \begin{proof}
Assume for contradiction that  there exist $\theta \geq \lfloor \frac{d}{2} \rfloor + 1$ distinct field elements  $\lambda_1, \lambda_2, \dots, \lambda_{\theta}$ such that 
$
 f_1(\lambda_j) = f_2(\lambda_j)$ and $\quad f'_1(\lambda_j) = f'_2(\lambda_j)
$ for all  $j \in [\theta]$. 
Then the nonzero polynomial $g(\lambda) = f_1(\lambda) - f_2(\lambda)$
is of   degree at most $d$ and satisfies 
$$
  \forall j\in [\theta],   ~~     g(\lambda_j) = 0 ~~\text{and} ~~ g'(\lambda_j) = 0.
$$
It follows  that
        \[
            \forall j \in [\theta], ~~ (\lambda - \lambda_j)^2 \mid g(\lambda),
        \]
Hence,  the nonzero polynomial $g(\lambda)$ must be  of degree $\geq 2\theta>d$, which gives a 
    contradiction. \hfill $\square$
    \end{proof}

    Note that the candidate polynomials in the set 
    ${\sf cp}$ generated by our  reconstructing 
     algorithm \( \Gamma_1. \mathcal{R} \) all have degree $\leq 2(k-b)-1$. 
      Lemma \ref{lemma2} shows that any two polynomials 
      from $\sf cp$  cannot have the same order-1 evaluations 
     at too many field elements. In the language of multiplicity codes, the 
     polynomials in $\sf cp$ must be  distant from each other. 
    The following theorem shows that by appropriately choosing
    the parameters $w,b,t$ and $k$, the size $L$ of ${\sf output\_list}$ in our reconstructing algorithm
     can be      made  as small as $k^{O(1)}$.

\begin{theorem} \label{theo:correctness of gamma1}
    The  scheme \( \Gamma_1 \)     is correct  with list size \( L = k^{O(1)} \) when
     \( w < \lfloor \frac{2(k-b) - 1}{t} \rfloor \) and \( wt = O(1) \) with respect to   \( k \).
\end{theorem}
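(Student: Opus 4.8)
The plan is to verify the two requirements in the definition of correctness separately: that the queried block $x_i$ always appears in ${\sf output\_list}$, and that $|{\sf output\_list}|=k^{O(1)}$. Since the reconstructing algorithm $\Gamma_1.\mathcal{R}$ is deterministic and the correctness statement quantifies over all outputs of $\mathcal{Q}$, once both events are shown the probability in Eq.~(\ref{eqn:cts}) is automatically $1$. Throughout set $a=k-b$ and $d=wt$; the hypothesis then reads $d<2a-1$ and $d=O(1)$ with respect to $k$, which in particular forces $\lfloor d/2\rfloor+1\le a$. Recall that the ``honest'' target polynomial is $f(\lambda)=F(G(\lambda))$ as in Eq.~(\ref{eqv1}); it has $\deg f\le wt=d$ and satisfies $f(0)=F(E(i))=x_i$.

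\emph{$x_i$ lies in the list.} Among the (wlog) first $k$ responding servers at most $b$ are Byzantine, so there is a set $H^{\ast}\subseteq[k]$ of $a$ honest indices. For $j\in H^{\ast}$ the answering algorithm yields $\alpha_j=u_j=F(G(\lambda_j))=f(\lambda_j)$ and, by the multivariate chain rule, $\beta_j=\left\langle {\bf v}_j,G'(\lambda_j)\right\rangle=\sum_{c}\tfrac{\partial F}{\partial z_c}\big|_{G(\lambda_j)}G_c'(\lambda_j)=f'(\lambda_j)$. Hence $f$ satisfies the $2a$ Hermite conditions that step~(3.1) imposes on $H^{\ast}$; since the interpolant of degree $\le 2a-1$ determined by these $2a$ conditions is unique and $\deg f\le d<2a-1$, we conclude $f_{H^{\ast}}=f$. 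Because $\deg f_{H^{\ast}}=\deg f\le d=wt$, step~(3.2) adds $f_{H^{\ast}}$ to ${\sf cp}$, and therefore $x_i=f(0)=f_{H^{\ast}}(0)\in{\sf output\_list}$.

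\emph{The list is short.} Since $|{\sf output\_list}|\le|{\sf cp}|$, it suffices to bound $|{\sf cp}|$. Each $g\in{\sf cp}$ has $\deg g\le d$, and, from the set $H$ it was produced from, $g(\lambda_j)=\alpha_j$ and $g'(\lambda_j)=\beta_j$ for all $j\in H$; thus the agreement set $A_g=\{\,j\in[k]:(g(\lambda_j),g'(\lambda_j))=(\alpha_j,\beta_j)\,\}$ has $|A_g|\ge a\ge\lfloor d/2\rfloor+1$. For each $g$ fix a subset $S_g\subseteq A_g$ with $|S_g|=\lfloor d/2\rfloor+1$. If $g_1,g_2\in{\sf cp}$ had $S_{g_1}=S_{g_2}=:S$, then $g_1$ and $g_2$ would have identical order-$1$ evaluations at every point of $S$ (both equal to $(\alpha_j,\beta_j)$ there), and since $\deg g_1,\deg g_2\le d\le 2|S|-1$, Hermite interpolation would force $g_1=g_2$. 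Hence $g\mapsto S_g$ is injective, so $|{\sf cp}|\le\binom{k}{\lfloor d/2\rfloor+1}\le k^{\lfloor d/2\rfloor+1}=k^{O(1)}$ because $d=wt=O(1)$. This establishes correctness with list size $L=k^{O(1)}$; the privacy and $t$-privacy of $\Gamma_1$ are inherited verbatim from the Woodruff--Yekhanin scheme, since $\Gamma_1.\mathcal{Q}=\Gamma.\mathcal{Q}$. A sharper bound follows by instead invoking Lemma~\ref{lemma2}, which shows $|A_{g_1}\cap A_{g_2}|\le\lfloor d/2\rfloor$ for distinct $g_1,g_2\in{\sf cp}$: truncating each $A_g$ to size $a$ yields an $(k,\,2(a-\lfloor d/2\rfloor),\,a)$ constant-weight code, and Theorem~\ref{theo:constant-weight-code} then gives $|{\sf cp}|\le\binom{k}{\lfloor d/2\rfloor+1}\big/\binom{a}{\lfloor d/2\rfloor+1}$, which is the estimate behind the list size in Table~\ref{table:1}.

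\emph{The main obstacle} is the list-size bound. The conceptual point is that the degree filter in step~(3.2), applied after ``overinterpolating'' a degree-$<2a-1$ polynomial from the $2a$ Hermite conditions of an $a$-subset, simultaneously (i) never discards the true $f$ and (ii) forces every surviving polynomial to agree with the received data on more than $d/2$ coordinates while being pairwise far apart in the order-$1$ multiplicity metric of Lemma~\ref{lemma2}; converting this pairwise separation into an explicit count of $k^{O(1)}$ — either by the injective encoding $g\mapsto S_g$ above or through the constant-weight code bound — is the technical heart of the argument. One also has to keep the elementary inequality $\lfloor d/2\rfloor+1\le a$ in view: it is exactly what the hypothesis $w<\lfloor(2(k-b)-1)/t\rfloor$ delivers, and it is what makes the subsets $S_g$ (equivalently, the constant-weight code) well-defined.
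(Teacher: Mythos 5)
Your proof is correct and follows essentially the same strategy as the paper: Part~(I) is identical (the honest set $H^\ast$ interpolates to $f$ itself, the degree test passes, and $f(0)=x_i$ enters the list), and Part~(II) rests on the same separation fact, namely that any two distinct degree-$\le wt$ polynomials in ${\sf cp}$ can share order-1 evaluations at no more than $\lfloor wt/2\rfloor$ of the $\lambda_j$ (Lemma~\ref{lemma2}). The one genuine difference is in how you convert that separation into a count. Your primary argument picks a witness set $S_g\subseteq A_g$ of size $\lfloor wt/2\rfloor+1$ and shows $g\mapsto S_g$ is injective, which gives the slightly cruder bound $|{\sf cp}|\le\binom{k}{\lfloor wt/2\rfloor+1}\le k^{\lfloor wt/2\rfloor+1}=k^{O(1)}$ without invoking any coding-theory machinery; the paper instead goes straight to interpreting the indicator vectors of the $H$-sets as a $(k,\,2(k-b-\lfloor wt/2\rfloor),\,k-b)$ constant-weight code and applies the Agrell--Vardy--Zeger bound (Theorem~\ref{theo:constant-weight-code}) to get the sharper $\binom{k}{\lfloor wt/2\rfloor+1}\big/\binom{k-b}{\lfloor wt/2\rfloor+1}=O\bigl((k/(k-b))^{\lfloor wt/2\rfloor+1}\bigr)$ that appears in Table~\ref{table:1}. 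You recover this sharper estimate in your closing remark, so nothing is lost; your presentation simply front-loads an elementary self-contained bound, which suffices for the $k^{O(1)}$ claim in the theorem statement, and defers the constant-weight refinement to an aside.
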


\begin{proof}
Let ${\sf output\_list}$ be the   list output by  $\Gamma_1.\mathcal{R}$. 
As per Eq. (\ref{eqn:cts}), it suffices to show that {\bf (I)} 
{\em $x_i \in {\sf output\_list}$, i.e., the  list output by   $\Gamma_1.\mathcal{R}$ always contains the 
expected block $x_i$};  and {\bf (II)} {\em $|{\sf output\_list}| \leq k^{O(1)}$, i.e.,  the size $L$ of 
the list output by  $\Gamma_1.\mathcal{R}$  is $k^{O(1)}$, under
the proposed choices of the parameters $w,b,t$ and $k$.}

\vspace{2mm}
\noindent
        {\bf (I)} 
Referring to the description of $\Gamma_1.\mathcal{R}$ in Figure \ref{fig:construction1},  wlog the first $k$ servers respond. Consider    the degree-\( wt \) polynomial 
    \[
        f(\lambda) = F(G(\lambda)) = F(E(i) + \sum_{s=1}^{t}  \lambda^{s} {\bf r}_s). 
    \]
    For every $j\in [k]$, it is easy to see that 
    \begin{align*}
        f(\lambda_j) &= F(G(\lambda_j)) = F(q_j), \\ 
        f'(\lambda_j) &= \sum_{c=1}^{m} \left.\frac{\partial F({\bf z})}{\partial z_{c}}\right|_{G(\lambda_j)} \cdot G'(\lambda_j)_c \\
        &= \left< {\bf v}_j  , G'(\lambda_j) \right>, 
    \end{align*}     
    where \( G'(\lambda_j)_c \) denotes the \( c \)-th entry of the vector \( G'(\lambda_j) \). 
   Let $a_j=(\alpha_j,\beta_j)$ be the response of the $j$-th server for every
$j\in[k]$. If at most $b$  of the  responses  \( \{a_j\}_{j \in [k]} \) are incorrect
(where ``the response $a_j$ is incorrect" means that either $\alpha_j\neq F(q_j)$ or 
$\beta_j\neq \langle {\bf v}_j  , G'(\lambda_j) \rangle$),   
    then     for at least \( k-b \) indices \( j \in [k] \) we have  
    $$f^{(\leq 1)}(\lambda_j)=(\alpha_j,\beta_j).$$
When the set $H$ in $\Gamma_1.\mathcal{R}$ is composed of the indices of 
  $k-b$ correct responses, the interpolated polynomial $f_H(\lambda)$
  is exactly equal to $f(\lambda)$ and thus has degree $\leq wt (< 2(k-b) - 1)$ and  
  results in $f_H(0)=f(0)= F(E(i))=x_i$ being included into 
  the set {\sf output\_list}.

\vspace{2mm}
\noindent
        {\bf (II)} 
Referring to the description of $\Gamma_1.\mathcal{R}$ in {\bf Fig.} \ref{fig:construction1},
as ${\sf output\_list} = \{f_H(0) : f_H(\lambda) \in {\sf cp}\}$, it is trivial to see that 
    \begin{equation} \label{eq1}
        |{\sf output\_list}| \leq |{\sf cp}|,
    \end{equation}
where the equality occurs  if and only if all polynomials in 
${\sf cp}$ have different constant terms. 
Let 
    \[
        \mathcal{H} = \{H : H \subseteq [k], |H| = k - b, \deg(f_H(\lambda)) \leq w \}.
    \]
be the set of all   subsets of  $[k]$ of cardinality  $k - b$ that 
gives an interpolated polynomial of degree    $\leq w$. 
According to the construction of $\sf cp$, it is trivial to see that ${\sf cp}=\{f_H(\lambda): H \in \mathcal{H}\}$. 
Therefore,  
 \begin{align}
 |{\sf cp}| \leq |\mathcal{H}|.
 \end{align}
It  is possible that 
 $ |{\sf cp}|< |\mathcal{H}|$ because different subsets $H_1,H_2\in \mathcal{H}$
 may give the same polynomial, i.e., $f_{H_1}(\lambda)=f_{H_2}(\lambda)$. 
Let  $\hat{\mathcal{H}} $ be a subset of $\mathcal{H}$ such that there is a bijection
 $g : \hat{\mathcal{H}} \to {\sf cp}$. Then 
    \[
        |{\sf cp}| = |\hat{\mathcal{H}}|.
    \]  
    For any two distinct subsets $H_1, H_2 \in \hat{\mathcal{H}}$, 
    the degree-$wt$  polynomials $g(H_1) = f_{H_1}(\lambda)$ and $g(H_2) = f_{H_2}(\lambda)$
     must be distinct. By Lemma \ref{lemma2}, the number of indices  
      $j\in[k]$ such that  $f_{H_1}^{(\leq 1)}(\lambda_j) = f_{H_2}^{(\leq 1)}(\lambda_j)$ 
       is upper bounded by $\lfloor wt/2 \rfloor$. Therefore,  we must have that  
  $
        | H_1 \cap H_2 | \leq \lfloor wt/2 \rfloor,
 $ and thus 
\begin{align}\label{eqn:H12}
 |H_1 \setminus H_2| = |H_2 \setminus H_1| \geq k-b- \lfloor wt/2 \rfloor.
\end{align}
     For every $H \in \hat{\mathcal{H}}$, we define a   binary vector $c_H = (c_{H,1}, \dots, c_{H,k})$ of length $k$ such that 
    \[
        c_{H,j} = \begin{cases}
            1, & j \in H, \\
            0, & j \notin H.
        \end{cases}
    \]  
    Then it is easy to see that $c_{H}$ is of Hamming weight $k-b$. 
    As per Eq.  (\ref{eqn:H12}), the Hamming distance between 
     $c_{H_1}$ and $c_{H_2}$ for any two distinct subsets $H_1,H_2\in \hat{\mathcal{H}}$ is at least $2\delta$ for 
    $$\delta = k-b-\lfloor wt/2 \rfloor.$$
Therefore, $C=\{c_H\}_{H \in \hat{\mathcal{H}}}$ is a binary $(k, 2\delta, k-b)$ constant weight code. 
Let $A(k, 2\delta, k-b)$ be the maximum size of a binary $(k, 2\delta, k-b)$ constant weight codes. 
As per Theorem \ref{theo:constant-weight-code}, we have that 
    \begin{align*}        
        |\hat{\mathcal{H}}| = |\{c_{H}\}_{H\in \hat{\mathcal{H}}}| 
        &\leq A(k,2\delta, k-b) \\ 
        &\leq \frac{{k \choose k-b-\delta+1}}{{k-b \choose k-b-\delta + 1}}\\
        &= \frac{{k \choose \lfloor wt/2 \rfloor+1}}{{k-b \choose \lfloor wt/2 \rfloor+1}}\\
        &= O((\frac{k}{k-b})^{\lfloor wt/2 \rfloor+1}) \\
        &= k^{O(1)},
    \end{align*}
  i.e.,  the size $L$ of ${\sf output\_list}$ is at most $k^{O(1)}$.       \hfill $\square$
\end{proof}

\vspace{2mm}
\noindent
{\bf $t$-Privacy.}
The $t$-privacy property of the proposed scheme is identical to that of 
the Woodruff-Yekhanin RPIR scheme, because both schemes share the same 
querying algorithm. 
}

    \vspace{1mm}
    \noindent
    {\bf Communication complexity.}
    The querying algorithm requires  the client to send a vector \(q_j\in \mathbb{F}_p^m\) to each of the $\ell$ servers.
    The answering algorithm requires each server \(\mathcal{S}_j\) to send a vector 
    \(a_j = (u_j, {\bf v}_j)\in \mathbb{F}_p^{m+1}\)
    to the client. Therefore, the average number of bits exchanged for retrieving one bit from the database is \(\ell(2m+1)=O(\ell m)\).
    As the integer  \(m\) is chosen such that 
    $
        {m \choose w} \geq n,
    $
    we have that \(m = O(w n^{1/w})\) and thus 
    \[
        CC_{\Gamma_1}(n) = O(\ell w n^{1/w}).
    \] 
    Note that $n\gg w$ in a typical application scenario of PIR, therefore $CC_{\Gamma_1}(n)$ is a monotonically decreasing function of \(w\). Ideally, we prefer to choose a large \( w \) in order to reduce the communication complexity.
    Theorem \ref{theo:correctness of gamma1} stipulates that the list size \( L = \left( \frac{k}{k-b} \right)^{\lfloor wt/2 \rfloor + 1} \).
    We present different strategies for choosing \( w \), which correspond to different list sizes and communication complexities.
    \begin{itemize}
        \item 
        When \( \frac{k}{k-b} = O(1) \), that is, there exists a constant integer \( c \) such that \( b < k - k/c \), we can choose \( w = O\left( \frac{\log k}{t} \right) \). 
        In this case, the communication complexity of our scheme \( \Gamma_1 \) is given by \( CC_{\Gamma_1}(n) = \frac{\ell \log k}{t} n^{O(t / \log k)} \), and the list size is \( L = k^{O(1)} \). 
    
        \item 
        When \( \frac{k}{k-b} = O(k^c) \) for some constant \( c \leq 1 \), we can choose \( w = O\left( \frac{1}{t} \right) \). 
        In this case, the communication complexity of our scheme \( \Gamma_1 \) is given by \( CC_{\Gamma_1}(n) = \frac{\ell}{t} n^{1 / O(1/t)} \), and the list size is \( L = k^{O(1)} \). 
    \end{itemize}
    Notably, when \( k \leq 2^5 \), the differences between \( O(k/t) \), \( O(\log k / t) \), and \( O(1 / t) \) are not significant. 
    In this case, we can directly set \( w = \left\lfloor \frac{2k-2b-2}{t} \right\rfloor \).  
    The communication complexity is 
    \[
        CC_{\Gamma_1}(n) = O\left( \ell (k-b) n^{1 / \lfloor (2k-2b-2) / t \rfloor} \right), 
    \] 
    and the list size is \( L = \left( \frac{k}{k-b} \right)^{k-b} \). 
    For the ease of sublinear communication complexity, we require that $b \leq k-2$.

\subsection{Improving the Client-Side Computation Complexity} \label{sec:Improving the client-side}   
    In Step $3$ of reconstructing algorithm  $\Gamma_1.\mathcal{R}$, the client interpolates a polynomial with the set $H$ for any $H\subset [k]$ of cardinality $k-b$, which incurs the $\binom{k}{k-b}$ interpolations. 
    In this section, we propose a method to reduce the number of interpolations to \( \binom{k}{wt/2} \).
    Since \( wt < 2(k-b) - 1 \) and \( wt = O(1) \) with respect to the parameter \( k \), the client-side computational complexity is significantly reduced. Specifically, the complexity decreases from exponential in \( k \) to a polynomial function of \( k \).

    We replace the Step $3$ of $\Gamma_1.\mathcal{R}$ with the following step:
    \begin{itemize}
        \item[3.] 
        Initialize a  set ${\sf cp}=\emptyset$ to store the candidate polynomials.  For any set $H\subseteq [k]$ of cardinality $\lfloor wt/2 \rfloor + 1$, 
        \begin{itemize}
            \item[(3.1)] Interpolate a polynomial $f_H(\lambda)$ such that for each $j\in H$, $f_H(\lambda_j) = \alpha_j, f_H'(\lambda_j) = \beta_j$. 

            \item[(3.2)] If $f_H(\lambda)$ is of degree at most $wt$ and there exist at least $k-b$ distinct values of $j \in [k]$ satisfying $f(\lambda_j) = \alpha_j, f'(\lambda_j) = \beta_j$  add $f_H(\lambda)$ to the set   ${\sf cp}$.
        \end{itemize}
    \end{itemize}
    
    To see why this substitution works, consider any polynomial \( f_H(\lambda) \in {\sf cp} \) constructed in the original Step 3 of \( \Gamma_2.\mathcal{R} \) for a set \( H \subseteq [k] \) of cardinality \( k-b \). Since \( f_H(\lambda) \) is included in \( {\sf cp} \), its degree must not exceed \( wt \). 
    Now, consider any subset \( H' \subseteq H \) with cardinality \( \lfloor wt/2 \rfloor + 1 \). The polynomial \( f_{H'}(\lambda) \), which satisfies \( f_{H'}^{(\leq 1)}(\lambda_j) = (\alpha_j, \beta_j) \) and has degree at most \( wt+1 \), is uniquely determined as \( f_{H'}(\lambda) = f_H(\lambda) \). 
    The proof can be established by contradiction and is omitted here for brevity. 
    Consequently, the polynomial \( f_H(\lambda) \) will also be included in the set \( {\sf cp} \) under the new Step 3.

    With the substitution of Step 3, we can reduce the client-side computation complexity of $\Gamma_1.\mathcal{R}$ to be polynomial in $k$.

{
\section{Perfect ldBRPIR based on weighted-degree polynomial} \label{sec:construction2}
     
        In this section, we construct a perfectly correct $(L,b,t,k,\ell)$-ldBRPIR scheme $\Gamma_2$ 
  with   list size \(L =O(k)\) and   
      Byzantine  robustness   \(b \leq k-\sqrt{kt}\).
     Compared with $\Gamma_1$, the scheme $\Gamma_2$ demonstrates reduced communication complexity
       when dealing with a larger number of servers.

     \begin{figure*}[t]
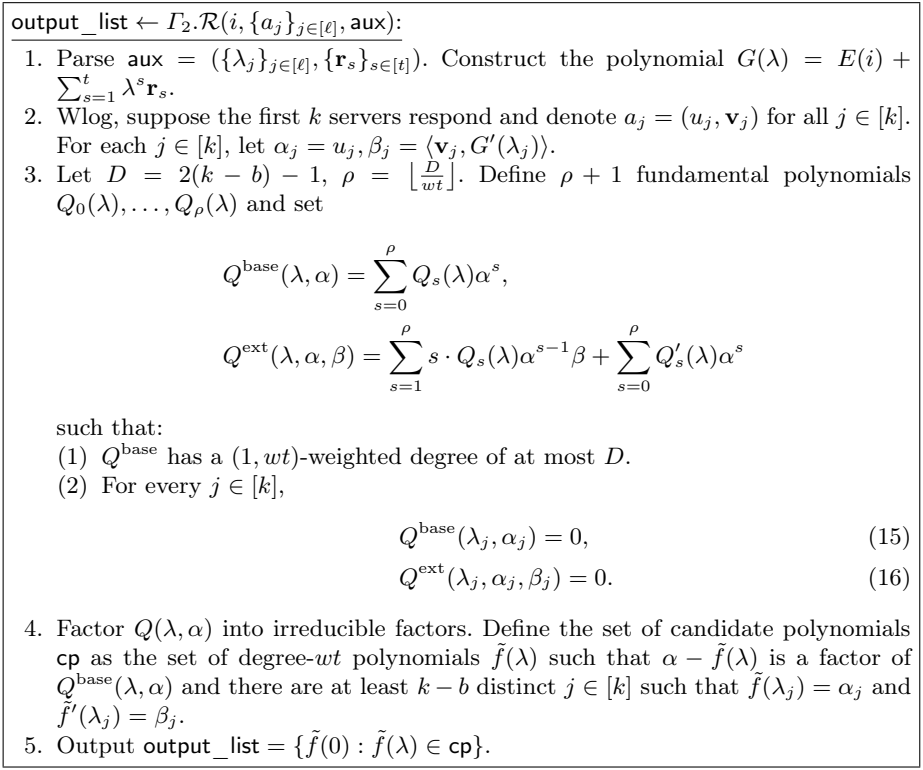

    	\begin{center}
    		\begin{boxedminipage}{\textwidth}
			\underline{${\sf output\_list} \leftarrow \Gamma_2. \mathcal{R}(i, \{a_j\}_{j\in [\ell]}, {\sf aux})$:}
			\vspace{0.5mm}
                    \begin{itemize}
                        \item[1.]   
                        Parse ${\sf aux} = (\{\lambda_j\}_{j\in [\ell]}, \{{\bf r}_s\}_{s\in [t]})$. Construct the polynomial $G(\lambda) = E(i) + \sum_{s = 1}^{t} \lambda^s {\bf r}_s$.

                        \item[2.]
                        Wlog,  suppose the first $k$ servers respond and denote   $a_j = (u_j, {\bf v}_j )$ for all $j\in [k]$. 
                        For each $j\in [k]$, let $\alpha_j = u_j, \beta_j = \left< {\bf v}_{j}, G'(\lambda_j) \right> $.
                        
                        \item[3.] 
                        Let \( D = 2(k-b) - 1 \), \( \rho = \left\lfloor \frac{D}{wt} \right\rfloor \). Define \(\rho + 1\) fundamental polynomials \( Q_0(\lambda), \ldots, Q_{\rho}(\lambda) \) and set
                        \begin{align*}                        
                            &Q^{\text{base}}(\lambda,\alpha) = \sum_{s=0}^{\rho} Q_{s}(\lambda)\alpha^s, \\
                            &Q^{\text{ext}}(\lambda,\alpha,\beta) = \sum_{s=1}^{\rho} s \cdot Q_s(\lambda)\alpha^{s-1}\beta + \sum_{s=0}^{\rho} Q'_s(\lambda)\alpha^s
                        \end{align*}
                        such that:
                        \begin{itemize}
                            \item[(1)] 
                            \( Q^{\text{base}} \) has a \( (1, wt) \)-weighted degree of at most \( D \). 
                            \item[(2)] For every \( j \in [k] \), 
                            \begin{align}
                                &Q^{\text{base}}(\lambda_j, \alpha_j) = 0, \label{eq1} \\ 
                                &Q^{\text{ext}}(\lambda_j, \alpha_j, \beta_j) = 0. \label{eq2}
                            \end{align}
                        \end{itemize}

                        \item[4.] 
                        Factor $Q(\lambda, \alpha)$ into irreducible factors.
                        Define the set of candidate polynomials \({\sf cp}\) as the set of degree-\(wt\) polynomials \(\tilde{f}(\lambda)\) such that \(\alpha - \tilde{f}(\lambda)\) is a factor of \(Q^{\text{base}}(\lambda, \alpha)\) and there are at least $k-b$ distinct $j\in[k]$ such that \(\tilde{f}(\lambda_j) = \alpha_j\) and $\tilde{f}'(\lambda_j) = \beta_j$.

                        \item[5.] 
                        Output ${\sf output\_list} = \{\tilde{f}(0): \tilde{f}(\lambda)\in {\sf cp}\}$.
                    \end{itemize}
    		\end{boxedminipage}
    	\end{center}
    	\caption{Reconstructing algorithm $\Gamma_2.\mathcal{R}$ of the $(L, t, k, \ell, b)$-ldBRPIR scheme $\Gamma_2$.}
    	\label{fig:construction2}
    \end{figure*}

\subsection{The Construction}
{
    Same as $\Gamma_1$, the scheme $\Gamma_2$ retains the querying and answering algorithms of the Woodruff-Yekhanin RPIR scheme $\Gamma$, $(\Gamma_2.\mathcal{Q} ,\Gamma_2.\mathcal{A}) = (\Gamma.\mathcal{Q}, \Gamma.\mathcal{A})$.  
    The key differences between $\Gamma_2$ and $\Gamma$ are as follows:  
    \textbf{(I)} The degree parameter $w$ is required to satisfy Eq.~\eqref{eq: wy w} in $\Gamma$, whereas in $\Gamma_2$ it is required to satisfy that $w \leq (k-b)^2 / t$.  
    \textbf{(II)} The reconstructing algorithm $\Gamma.\mathcal{R}$ in $\Gamma$ is replaced by the reconstructing algorithm $\Gamma_2.\mathcal{R}$ in $\Gamma_2$, as shown in \textbf{Fig}~\ref{fig:construction2}.
}
    
    In scheme $\Gamma_2$, we achieve the expected list size and Byzantine robustness by extending Sudan's list-decoding algorithm for Reed-Solomon codes to the order-$1$ univariate multiplicity codes.
Each responding server $\mathcal{S}_j (j\in[k])$ in the proposed scheme $\Gamma_2$ returns both an evaluation of the polynomial $F$ that encodes the database $\bf x$ and $m$ partial derivatives of $F$, all 
computed at the same query
 point $q_j\in \mathbb{F}_p^m$. The response of each server $\mathcal{S}_j$ gives an order-1 evaluation
 $f(\lambda)=F(G(\lambda))$ at a field element $\lambda_j$, i.e.,
  $f^{(\leq 1)}(\lambda_j)$.  When at most 
  $b$ out of the $k$ responding servers are malicious and respond incorrectly, the problem of
constructing a list that contains $x_i=f(0)$ can be reduced to the problem of
building a list that contains  $f(\lambda)$, which is exactly the problem of list decoding
an order-1 univariate multiplicity code for $f(\lambda)$ from the (corrupted) codeword
$\{(\alpha_j, \beta_j)\}_{j=1}^k$, where 
$(\alpha_j,\beta_j)=f^{(\leq 1)}(\lambda_j)$  
for any honest server $\mathcal{S}_j$ but may not be true for a malicious server. 
 
 While Sudan's list decoding algorithm \cite{sudan1997decoding} enables one to correct a relatively large fraction of errors
 but is only applicable to order-0 univariate multiplicity codes (i.e., Reed-Solomon codes) and 
 the recent list decoding algorithms \cite{guruswami2011optimal} are applicable to general order-$s$ univariate 
 multiplicity codes but only allow recovery from a relatively small fraction of errors,
 the main innovative idea underlying  $\Gamma_2$ is achieving the 
 (possibly) best from both worlds   by  injecting Sudan's idea of using 
the non-linear terms $\alpha^c$ in the  weighted-degree  bivariate polynomial $Q$ of Eq. (\ref{eqn:sudan})
 into the construction of the multivariate polynomial $Q$ of Eq. (\ref{eqn:gw}), which is 
 linear in  $\alpha_0,\ldots,\alpha_s$. 
 To realize this idea, we simply  replace  the linear term
 $\alpha_c$ in  Eq. (\ref{eqn:gw}) with its properly selected   higher degree powers. 
The main observation about this simple modification to 
Eq. (\ref{eqn:gw}) is that the new polynomial may  accommodate a much larger  number of monomials and 
thus significantly improve the fraction of correctable errors. 

We denote by  \( Q^{\text{base}} (\lambda, \alpha)\)   the new polynomial resulted from the 
aforementioned idea and denote by    $Q^{\text{ext}}(\lambda, \alpha, \beta)$ 
the polynomial obtained by  applying the special operator $\mathcal{D}$ from Section \ref{sec:umc}
to $Q^{\text{base}} (\lambda, \alpha)$, where $\beta=\mathcal{D}(\alpha)$. 
More precisely, for $\rho = \left\lfloor \frac{D}{wt} \right\rfloor$, we 
choose a bivariate base polynomial 
\begin{align}\label{eqn:Qbase}
        Q^{\text{base}}(\lambda, \alpha) = \sum_{s=0}^\rho Q_s(\lambda)\alpha^s.
\end{align}
and impose   $k$ constraints of the form  
$$ Q^{\text{base}}(\lambda_j, \alpha_j) = 0 $$ for all \( j \in [k] \). 
We calculate the trivariate extended polynomial 
$$
Q^{\text{ext}}(\lambda,\alpha,\beta)=\mathcal{D}(Q^{\text{base}}) = \sum_{s=1}^{\rho} s \cdot Q_s(\lambda)\alpha^{s-1}\beta + \sum_{s=0}^{\rho} Q'_s(\lambda)\alpha^s$$
and impose     \( k \)  additional constraints  of the form
    $$
        Q^{\text{ext}}(\lambda_j, \alpha_j, \beta_j) = 0 
    $$ 
    for all \( j \in [k] \).
    
    To better understand the roles of the  polynomials $\{Q_s(\lambda)\}_{s=0}^\rho$, consider the polynomial $p(\lambda) = Q^{\text{base}}(\lambda, \tilde{f}(\lambda))$, where   $\tilde{f}(\lambda)$ is a polynomial  of degree $\leq wt$ such that    $\tilde{f}^{(\leq 1)}(\lambda_j) = (\alpha_j,\beta_j)$
     for at least $k-b$ distinct indices $j\in [k]$. 
If $j\in[k]$ is any index  such that  $\tilde{f}^{(\leq 1)}(\lambda_j) = (\alpha_j,\beta_j)$, 
then we would have  
\begin{align*}
        p(\lambda_j) &= Q^{\text{base}}(\lambda_j, \tilde{f}(\lambda_j)) = 0; \\
         p'(\lambda_j) &= Q^{\text{ext}}(\lambda_j, \tilde{f}(\lambda_j), \tilde{f}'(\lambda_j)) = 0.
\end{align*} 
    Hence, the polynomial $p(\lambda)$ should be of degree  $\geq 2(k-b)$ if it is nonzero.
By choosing a weighted degree parameter   $D =2(k-b)-1$ and properly choosing  
degrees for the polynomials  $\{Q_s(\lambda)\}_{s=0}^\rho$, we force  
 $p(\lambda)$ to be   a polynomial of degree   $\leq D$ and thus  
  identically zero, which in turn enable use to extract  $\tilde{f}(\lambda)$
by factoring the  interpolated polynomial $Q^{\text{base}}(\lambda, \alpha)$.

    

\subsection{Analysis}
    In this section, we show that the proposed scheme $\Gamma_2$ is indeed an $(L,b,t,k,\ell)$-ldBRPIR scheme for  \( L = O(k) \) and $b\leq k-\sqrt{kt}$. 

    \vspace{2mm}
    \noindent
    {\bf Correctness.} 
    To show that the proposed scheme is correct, we 
 firstly identify under what conditions there do exist  $\rho+1$ polynomials 
 $\{Q_s(\lambda)\}_{s=0}^{\rho}$ that satisfy the 
 the constraints of Eq. (\ref{eq1}) and (\ref{eq2}). Such conditions 
 may guide our selections of parameters. 
     \begin{lemma}
    \label{claim:1}
 If $\rho = \lfloor D/wt \rfloor$ and $w \leq \frac{(D+1)^2}{4kt} \left(=\frac{(k-b)^2}{kt}\right)$, there exist $\rho + 1$ polynomials $\{Q_s(\lambda)\}_{s=0}^{\rho}$ that satisfy the constraints imposed by
 Eq. (\ref{eq1}) and (\ref{eq2}), where $\deg(Q_s)\leq D-swt$ 
 for all $s=0,1,\ldots,\rho$.  
    \end{lemma}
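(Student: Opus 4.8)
The plan is to treat this as the standard ``interpolation step'' of a Sudan-type list-decoding argument: encode the requirements of Eq.~(\ref{eq1}) and Eq.~(\ref{eq2}) as a homogeneous linear system in the coefficients of $Q_0,\dots,Q_\rho$, and show that it has strictly more unknowns than equations, hence a nonzero solution.

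First I would count unknowns. We insist $\deg Q_s \le D - swt$ for each $s=0,\dots,\rho$, which is meaningful precisely because $s\le\rho=\lfloor D/wt\rfloor$ forces $D-swt\ge 0$. So the total number of coefficients available is
\[
N \;=\; \sum_{s=0}^{\rho}(D-swt+1)\;=\;(\rho+1)(D+1)-\frac{wt\,\rho(\rho+1)}{2}.
\]
These degree bounds already guarantee condition~(1), i.e.\ that the $(1,wt)$-weighted degree of $Q^{\text{base}}$ is at most $D$, so no extra constraint is needed there. Each equation $Q^{\text{base}}(\lambda_j,\alpha_j)=0$ and each equation $Q^{\text{ext}}(\lambda_j,\alpha_j,\beta_j)=0$ is linear and homogeneous in these coefficients, with $\lambda_j,\alpha_j,\beta_j$ treated as fixed field constants, so Eq.~(\ref{eq1}) and Eq.~(\ref{eq2}) together impose $M=2k$ homogeneous linear constraints. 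A nonzero $\{Q_s\}_{s=0}^{\rho}$ therefore exists as soon as $N>2k$, and that inequality is what I would establish.

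The core computation is the lower bound on $N$. Writing $D=\rho\,wt+r$ with $0\le r\le wt-1$, one gets $\rho+1=(D-r+wt)/wt$ and $D+1-\tfrac{wt\rho}{2}=(D+2+r)/2$, hence $N=(D-r+wt)(D+2+r)/(2wt)$. I would then show $(D-r+wt)(D+2+r)>(D+1)^2$: expanding the difference of the two sides reduces this to checking $wt(4D+wt+4)>(wt-2r-2)^2$, and since $0\le r\le wt-1$ makes the right-hand side at most $(wt)^2$ while the left-hand side strictly exceeds $(wt)^2$ (as $D\ge1$, $wt\ge1$), the inequality holds. Thus $N>(D+1)^2/(2wt)$. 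Finally, the hypothesis $w\le(D+1)^2/(4kt)$ — equivalently $w\le(k-b)^2/(kt)$, using $D+1=2(k-b)$ — rearranges to $(D+1)^2/(2wt)\ge 2k$, so $N>2k=M$ and the claimed nonzero polynomials exist.

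I expect the only delicate point to be the floor in $\rho=\lfloor D/wt\rfloor$: a crude estimate $N\ge(D+1)^2/(2wt)$ comes out just barely too weak (it loses an additive term of order $1/wt$), which is why the exact substitution $D=\rho\,wt+r$ and the exact identity for $N$ are needed to get the strict inequality $N>2k$ cleanly. Everything else is routine linear algebra and arithmetic, and I would keep those verifications brief.
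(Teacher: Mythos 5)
Your proof takes essentially the same route as the paper's: count the coefficients of $Q_0,\dots,Q_\rho$, observe that Eq.~(\ref{eq1}) and Eq.~(\ref{eq2}) impose $2k$ homogeneous linear constraints, and show that coefficients strictly outnumber constraints so a nonzero solution exists. The only difference is that you carefully work out the floor arithmetic (via $D=\rho\,wt+r$) to justify the lower bound $\sum_{s=0}^{\rho}(D-swt+1)>\frac{(D+1)^2}{2wt}$, whereas the paper simply asserts the slightly stronger bound $\ge\frac{(D+1)^2}{2wt}+1$ without showing the computation; your explicit derivation is a welcome filling-in of that detail, but it is not a different argument.
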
 
    \begin{proof}
        Consider the bivariate polynomial \(Q^{\text{base}}(\lambda, \alpha)\) of
         Eq. (\ref{eqn:Qbase}). 
         If we choose  \( D=2(k-b)-1\) and choose every  $Q_s(\lambda)$ there to be a 
         polynomial of degree $\leq D-swt$, then the  
          \((1, wt)\)-weighted degree  of \(Q^{\text{base}}(\lambda, \alpha)\)  
       is $\leq D$. 
          Let \({\sf num}\) be the number   of monomials in   \(Q^{\text{base}}(\lambda, \alpha)\).
           Given that $\rho = \lfloor D/wt \rfloor$,  we have 
        \[
            {\sf num} = \sum_{s=0}^{\rho} (D - swt + 1) \geq \frac{(D+1)^2}{2wt}+1.
        \]
           On the other hand, the  number of constraints imposed by Eq. (\ref{eq1}) and (\ref{eq2}) is \(2k\).
When  $w \leq \frac{(D+1)^2}{4kt}$, we have that 
        $$
        2k \leq \frac{(D+1)^2}{2wt} < {\sf num}.
        $$
As the number of coefficients in $Q^{\text{base}}(\lambda, \alpha)$ exceeds the number of constraints given by Eq. (\ref{eq1}) and (\ref{eq2}), there must exist  a nonzero bivariate polynomial $Q^{\text{base}}(\lambda, \alpha)$ that satisfies all of the $2k$ constraints.     The existence of  \(Q^{\text{base}}(\lambda, \alpha)\) implies that of  \(Q_0(\lambda), \ldots, Q_{\rho}(\lambda)\).
        \hfill $\square$    
    \end{proof}

    Lemma \ref{claim:1} shows the   existence of the polynomials $Q_0,..., Q_{\rho}$  when we properly choose the parameters    in $\Gamma_2$.
    Since \( w \) is the degree of the polynomial \( F({\bf z}) \) encoding the database \( {\bf x} \), we must have that  \( w \geq 1 \), which together with the condition
    $w \leq   (k-b)^2/(kt) $ implies that    \( b \leq k - \sqrt{kt} \), i.e.,
    the scheme $\Gamma_2$ can tolerate as many as $k - \sqrt{kt}$ malicious servers that respond incorrectly. 
     \begin{theorem} \label{theo:correctness of gamma2}
        The   scheme $\Gamma_2$   is correct with list size $L= O(k)$ when $1 \leq  w \leq \frac{(k-b)^2}{kt}$.
    \end{theorem}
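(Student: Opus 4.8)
The plan is to verify the two requirements of the correctness definition (Eq.~\eqref{eqn:cts}) separately: \textbf{(I)} that ${\sf output\_list}$ always contains $x_i$, and \textbf{(II)} that $|{\sf output\_list}| = O(k)$. Throughout, assume without loss of generality that servers $\mathcal{S}_1,\dots,\mathcal{S}_k$ respond and at most $b$ of them are Byzantine, so that at least $k-b$ indices $j\in[k]$ satisfy $f^{(\le 1)}(\lambda_j)=(\alpha_j,\beta_j)$, where $f(\lambda)=F(G(\lambda))$ is the true polynomial, of degree $\le wt$ since $F$ is homogeneous of degree $w$ and the entries of $G$ have degree $t$. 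Applying Lemma~\ref{claim:1} with $D=2(k-b)-1$ and the hypothesis $w\le (k-b)^2/(kt)=(D+1)^2/(4kt)$ guarantees that Step~3 of $\Gamma_2.\mathcal{R}$ succeeds: there is a nonzero $Q^{\text{base}}(\lambda,\alpha)=\sum_{s=0}^{\rho}Q_s(\lambda)\alpha^s$ of $(1,wt)$-weighted degree $\le D$, with $\deg Q_s\le D-swt$, satisfying the $2k$ constraints \eqref{eq1}–\eqref{eq2}.

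For \textbf{(I)}, set $p(\lambda)=Q^{\text{base}}(\lambda,f(\lambda))$. The product/chain rule for formal derivatives gives $p'(\lambda)=Q^{\text{ext}}(\lambda,f(\lambda),f'(\lambda))$, which is exactly the polynomial interpolated in Step~3 via $Q^{\text{ext}}=\mathcal{D}(Q^{\text{base}})$. Hence for every honest index $j$ we get $p(\lambda_j)=Q^{\text{base}}(\lambda_j,\alpha_j)=0$ and $p'(\lambda_j)=Q^{\text{ext}}(\lambda_j,\alpha_j,\beta_j)=0$, so $(\lambda-\lambda_j)^2\mid p(\lambda)$; since the $\lambda_j$ are distinct and there are at least $k-b$ honest indices, $p(\lambda)$ is divisible by a polynomial of degree $\ge 2(k-b)$. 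On the other hand, substituting the degree-$\le wt$ polynomial $f(\lambda)$ for $\alpha$ in a monomial $Q_s(\lambda)\alpha^s$ of weighted degree $\le D$ yields a polynomial of degree $\le D = 2(k-b)-1 < 2(k-b)$. Therefore $p\equiv 0$, i.e. $f(\lambda)$ is a root of $Q^{\text{base}}(\lambda,\cdot)$; since $\alpha-f(\lambda)$ is monic in $\alpha$, the factor theorem over the ring $\mathbb{F}_p[\lambda]$ yields $(\alpha-f(\lambda))\mid Q^{\text{base}}(\lambda,\alpha)$. As $f$ has degree $\le wt$ and agrees with at least $k-b$ of the tuples $(\alpha_j,\beta_j)$ in order-$1$ evaluation, it meets every condition of Step~4, so $f(0)\in{\sf output\_list}$; and $f(0)=F(G(0))=F(E(i))=x_i$.

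For \textbf{(II)}, every $\tilde f\in{\sf cp}$ contributes a factor $\alpha-\tilde f(\lambda)$ of $Q^{\text{base}}(\lambda,\alpha)$, and distinct $\tilde f$ give pairwise non-associate, hence pairwise coprime, such factors. Viewing $Q^{\text{base}}$ as a nonzero polynomial in $\alpha$ over the field $\mathbb{F}_p(\lambda)$, of degree $\le\rho$, the product of these factors divides it, so $|{\sf cp}|\le\rho=\lfloor D/(wt)\rfloor\le D=2(k-b)-1$ (here $w\ge 1$ ensures $wt\ge 1$ so $\rho$ is well defined and $\le D$). Consequently $|{\sf output\_list}|\le|{\sf cp}|\le 2(k-b)-1 \le 2k-1 = O(k)$ (and already below $k$ when $b>k/2$). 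Combining \textbf{(I)} and \textbf{(II)} establishes Eq.~\eqref{eqn:cts} with $L=O(k)$; $t$-privacy is inherited verbatim from the Woodruff--Yekhanin scheme, since $\Gamma_2$ uses the same querying algorithm.

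The step I expect to be the main obstacle is \textbf{(I)}: one must take care that the degree of $p(\lambda)$ is counted under the $(1,wt)$-weighted-degree convention, so that it is \emph{strictly} below $2(k-b)$ — which is precisely what forces $p\equiv 0$ once $p$ is known to be divisible by a degree-$\ge 2(k-b)$ polynomial — and that the passage from ``$f(\lambda)$ is a root of $Q^{\text{base}}(\lambda,\cdot)$'' to ``$(\alpha-f(\lambda))\mid Q^{\text{base}}$'' is justified over $\mathbb{F}_p[\lambda]$ (not merely over a field), which is legitimate exactly because $\alpha-f(\lambda)$ is monic in $\alpha$. The chain-rule identity $p'(\lambda)=Q^{\text{ext}}(\lambda,f(\lambda),f'(\lambda))$ is routine but is the linchpin that turns the two families of interpolation constraints into a genuine double root of $p$ at each honest evaluation point.
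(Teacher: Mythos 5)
Your proof is correct and follows the same approach as the paper's: part (I) introduces $p(\lambda)=Q^{\text{base}}(\lambda,f(\lambda))$, uses $p'=Q^{\text{ext}}(\lambda,f,f')$ to get a double root at each honest $\lambda_j$, and forces $p\equiv 0$ by comparing the weighted-degree bound $D=2(k-b)-1$ against the divisibility by a degree-$2(k-b)$ polynomial; part (II) bounds $|{\sf cp}|$ by the $\alpha$-degree $\rho$ of $Q^{\text{base}}$. Your additional remarks — justifying the factor theorem over $\mathbb{F}_p[\lambda]$ via monicity in $\alpha$, and arguing coprimality of the linear factors — are small rigor refinements of what the paper states more briefly, not a different route.
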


    \begin{proof}
        Let ${\sf output\_list}$ be the list output by $\Gamma_2.\mathcal{R}$. 
It suffices to show that 
        {\bf (I)} {\em $x_i \in {\sf output\_list}$}; and 
        {\bf (II)} {\em $|{\sf output\_list}| \leq O(k)$, under the proposed choices of the parameters $w,b,t$ and $k$.}

\vspace{2mm}
\noindent
        {\bf (I)} 
        Referring to the description of $\Gamma_2.\mathcal{R}$ in Figure \ref{fig:construction2}, wlog the first $k$ servers respond. Consider    the degree-\( wt \) polynomial 
        \[
            f(\lambda) = F(G(\lambda)) = F(E(i) + \sum_{s=1}^{t}  \lambda^{s} {\bf r}_s). 
        \]
        For every $j\in [k]$, it is easy to see that 
        \begin{align*}
            f(\lambda_j) &= F(G(\lambda_j)) = F(q_j), \\ 
            f'(\lambda_j) &= \sum_{c=1}^{m} \left.\frac{\partial F({\bf z})}{\partial z_{c}}\right|_{G(\lambda_j)} \cdot G'(\lambda_j)_c \\
            &= \left< {\bf v}_j  , G'(\lambda_j) \right>, 
        \end{align*}     
        where \( G'(\lambda_j)_c \) denotes the \( c \)-th entry of the vector \( G'(\lambda_j) \). 
        Consider the polynomial \( p(\lambda) = Q^{\text{base}}(\lambda, f(\lambda)) \).  
        Since $f(\lambda)$ is a polynomial of degree $wt$ and the bivariate polynomial $Q^{\text{base}}(\lambda, \alpha)$ has a $(1,wt)$-weighted degree  $\leq D$, the degree of $p(\lambda)$ is  $\leq D$.
        For any $j\in [k]$ and any triplet $(\lambda_j, \alpha_j, \beta_j)$ such that
        \( f^{(\leq 1)}(\lambda_j) = (\alpha_j,\beta_j) \), we have
        \begin{align*}
            p(\lambda_j) 
            &= Q^{\text{base}}(\lambda_j, f(\lambda_j)) \\
            &= Q^{\text{base}}(\lambda_j, \alpha_j) \\
            &= 0,\\
            p'(\lambda_j) 
            &= \sum_{s=1}^{\rho} s\cdot Q_{s}(\lambda_j) f(\lambda_j)^{s-1} f'(\lambda_j) + \sum_{s=0}^{\rho} Q'_{s}(\lambda_j) f(\lambda_j)^s \\
            &= Q^{\text{ext}}(\lambda_j, f(\lambda_j), f'(\lambda_j)) \\
            &= Q^{\text{ext}}(\lambda_j, \alpha_j, \beta_j) \\
            &= 0.
        \end{align*}
Thus, the polynomial $p(\lambda)$ should be of degree  $\geq 2(k-b)$, if it is nonzero.
On the other hand, our choices of all parameters imply that
 \( \deg(p(\lambda))\leq D=2(k-b) - 1 \). Hence,  \( p(\lambda) \) must be the zero polynomial, i.e., 
  \( Q^{\text{base}}(\lambda, f(\lambda)) = 0 \).  It follows  that \( (\alpha - f(\lambda)) \mid Q^{\text{base}}(\lambda, \alpha) \)
Thus, $f(\lambda)$ will be included in the set  ${\sf cp}$ of candidate polynomials can cause  $f(0) = x_i$ to 
be included into the set ${\sf output\_list}$.

\vspace{2mm}
\noindent
        {\bf (II)}    
  Referring to the description of $\Gamma_2.\mathcal{R}$ in Figure \ref{fig:construction2},  for every
  $\tilde{f}(\lambda)\in {\sf cp}$, the polynomial $\alpha-\tilde{f}(\lambda)$ must be a linear factor of 
         $Q^{\text{base}}(\lambda, \alpha)$.
         As         $Q^{\text{base}}(\lambda, \alpha)$ is a polynomial of degree $\leq \rho$
          in
          $\alpha$, the number of  polynomials in ${\sf cp}$ must be at most 
          $\rho$, i.e., $|{\sf cp}|\leq \rho$. 
                 Consequently,  we have that 
        \[
            L =|{\sf output\_list}|\leq |{\sf cp}|\leq  \rho = \lfloor \frac{2(k-b)-1}{wt} \rfloor = O(k). 
        \]    
In particular, we have that $L \leq 2k$ for any choices of $(k,b,t)$. 
Furthermore, when the majority of the server responses are incorrect, that is, $b > k/2$, we have 
        $
            L \leq k.
        $
        \hfill $\square$
    \end{proof}
}

    \vspace{1mm}
    \noindent
    {\bf $t$-Privacy.}
    The $t$-privacy property of the proposed scheme is identical to that of the Woodruff-Yekhanin RPIR scheme and our scheme $\Gamma_1$, because those schemes share the same querying algorithm. 


    \vspace{1mm}
    \noindent
    {\bf Communication complexity.}
    Same as scheme $\Gamma_1$,
    we have \(m = O(w n^{1/w})\) in scheme $\Gamma_2$ and the communication complexity is 
    \[
        CC_{\Gamma_2}(n) = O(\ell w n^{1/w}).
    \] 
    Note that $n\gg w$ in a typical application scenario of PIR, therefore $CC_{\Gamma_2}(n)$ is a monotonically decreasing function of \(w\). Ideally, we prefer to choose a large \( w \) in order to reduce the communication complexity.
    Theorem \ref{theo:correctness of gamma2} establishes a scaled upper bound for \(w\), specifically \(w \leq \frac{(k-b)^2}{kt}\), implying that the communication complexity \( CC_{\Gamma_2}(n) \) of our scheme \(\Gamma_2\) is
    \[
        CC_{\Gamma_2}(n) =  O\left(\frac{\ell(k-b)^2}{kt} n^{1/ \lfloor \frac{(k-b)^2}{kt} \rfloor}\right).
    \]

    Our scheme \( \Gamma_2 \) shares the same Byzantine tolerance bound \( b \leq k - \sqrt{kt} \) as the scheme in \cite{goldberg2007improving}. 
    For different numbers of malicious servers \( b \), the relationship between the communication complexity $CC_{\Gamma_2}(n)$ of scheme \( \Gamma_2 \) and the database size \( n \) is compared with the schemes in \cite{goldberg2007improving} and \cite{devet2012optimally} as follows:  
    \begin{itemize}  
        \item  
        When \( k - \sqrt{2kt} < b \leq k - \sqrt{kt} \), the communication complexity of our scheme \( \Gamma_2 \) is \( O(n) \), which is less efficient than the communication complexity of the schemes in both \cite{goldberg2007improving} and \cite{devet2012optimally}, achieving \( O(n^{1/2}) \).  

        \item  
        When \( k - \sqrt{3kt} < b \leq k - \sqrt{2kt} \), the communication complexity of our scheme \( \Gamma_2 \) is \( O(n^{1/2}) \), matching the efficiency of the schemes in \cite{goldberg2007improving} and \cite{devet2012optimally}.  

        \item  
        When \( b \leq k - \sqrt{3kt} \), the communication complexity of our scheme \( \Gamma_2 \) can achieve \( o(n^{1/2}) \), offering a significant improvement over the schemes in \cite{goldberg2007improving} and \cite{devet2012optimally}.  
    \end{itemize}

\section{Implementation} \label{sec:implementation}

We implemented the scheme $\Gamma_1$ and $\Gamma_2$ presented in this paper using C++ and the FLINT library \cite{flint} and compared the efficiency of the scheme described in \cite{goldberg2007improving} and \cite{devet2012optimally}. 
The database used in the experiments was generated with a custom Python-based generator.
Our implementation of the schemes and the database generator is publicly available on \href{https://github.com/LDPIR/Efficient-list-decodable-BRPIR-with-Higher-Byzantine-Tolerance}{GitHub}. We have uploaded only a smaller example database and Larger databases can be reproduced using the generator provided in our repository.

We measure the performance of our scheme on a computer with a 16-core intel Xeon Gold 6250
CPU \@ 3.90GHz and 256 GB RAM, running Ubuntu 20.04. All of our experiments are single-threaded. 

In the foregoing sections, we scrutinized the Byzantine robustness of $\Gamma_1$ and $\Gamma_2$. In our experiments, we focus on evaluating 
\textbf{(1)} 
the communication volume exchanged between the client and the servers; and
\textbf{(2)} 
the size of the output lists generated by the schemes.

{\bf (1)}
To evaluate the communication volume between the client and servers, we analyze the total size of the queries \(\{q_j\}_{j \in [\ell]}\) generated by the querying algorithm and the responses \(\{a_j\}_{j \in [\ell]}\) produced by the answering algorithm.

{\bf (2)}
To determine the output list size, we count the number of elements in the list \({\sf output\_list}\) produced by the reconstructing algorithm. 
For different databases \({\bf x}\), indices \(i\) of interest to the client, chosen queries \(\{q_j\}_{j\in [\ell]}\), and adversarial responses \(\{\hat{a}_j\}_{j\in [b]}\) from Byzantine servers, the size of the output list from the reconstructing algorithm varies. Moreover, it is challenging to identify a specific combination of \({\bf x}\), \(i\), \(\{q_j\}_{j\in [\ell]}\), and \(\{\hat{a}_j\}_{j\in [b]}\) that maximizes the output list size.
To address this, we fix modest values for the database size \(n\), the field size \(p\), and the parameters \((k,b,t)\), then allow the adversary to inject random erroneous responses. We execute the protocol repeatedly and record the worst‐case list size yielded by the reconstructing algorithm.

\clearpage

\begin{table}[h!]
\centering
\small
\begin{tabular}{c|c@{\hskip 4pt}c@{\hskip 4pt}c|c@{\hskip 4pt}c@{\hskip 4pt}c}

\toprule
\multicolumn{1}{c|}{\multirow{2}{*}{\textbf{Scheme}}}
  & \multicolumn{3}{c|}{\bf Vary $k$}
  & \multicolumn{3}{c}{\bf Vary $b$} \\
\cmidrule(lr){2-4} \cmidrule(lr){5-7}
  & $k$ & Asymptotic & Experimental
  & $b$ & Asymptotic & Experimental \\

  
\midrule

\multirow{5}{*}{$\Gamma_1$}
  & $16$  & $O(n^{1/4})$ & $20.0 \pm 0.1$ MiB
  & $10$  & $O(n^{1/16})$ & $735.5 \pm 20 $ KiB\\
  & $18$  & $O(n^{1/8})$ & $2.2 \pm 0.1$ MiB 
  & $11$  & $O(n^{1/14})$ & $821.3 \pm 20 $ KiB \\
  & $20$  & $O(n^{1/12})$ & $1.0 \pm 0.1$ MiB
  & $12$  & $O(n^{1/12})$ & $1.0 \pm 0.1$ MiB \\
  & $22$  & $O(n^{1/16})$ & $732.2 \pm 20 $ KiB
  & $13$  & $O(n^{1/10})$ & $1.4 \pm 0.1$ MiB \\
  & $24$  & $O(n^{1/20})$ & $612.5 \pm 20 $ KiB
  & $14$  & $O(n^{1/8})$ & $2.2 \pm 0.1$ MiB \\
\midrule
\multirow{5}{*}{$\Gamma_2$}
  & $16$ & $O(n)$    & \text{Too large}
  & $10$ & $O(n^{1/6})$ & $1.1 \pm 0.1$ MiB \\
  & $18$ & $O(n^{1/2})$ & $325.8 \pm 5$ MiB
  & $11$ & $O(n^{1/5})$ & $1.5 \pm 0.1$ MiB \\
  & $20$ & $O(n^{1/4})$ & $3.6 \pm 0.1$ MiB
  & $12$ & $O(n^{1/4})$ & $3.6 \pm 0.1$ MiB \\
  & $22$ & $O(n^{1/5})$ & $1.5 \pm 0.1$ MiB
  & $13$ & $O(n^{1/3})$ & $15.3 \pm 0.2$ MiB\\
  & $24$ & $O(n^{1/8})$ & $421.5 \pm 20$ KiB
  & $14$ & $O(n^{1/2})$ & $317.2 \pm 5$ MiB \\
\midrule
\multirow{5}{*}{\cite{goldberg2007improving}}
  & $16$ & /            &  /
  & $10$ & $O(n^{1/2})$ & $12.8\pm0.1$ MiB \\
  & $18$ & $O(n^{1/2})$ & $12.8\pm0.1$ MiB
  & $11$ & $O(n^{1/2})$ & $12.8\pm0.1$ MiB \\
  & $20$ & $O(n^{1/2})$ & $12.8\pm0.1$ MiB
  & $12$ & $O(n^{1/2})$ & $12.8\pm0.1$ MiB \\
  & $22$ & $O(n^{1/2})$ & $12.8\pm0.1$ MiB
  & $13$ & $O(n^{1/2})$ & $12.8\pm0.1$ MiB \\
  & $24$ & $O(n^{1/2})$ & $12.8\pm0.1$ MiB
  & $14$ & $O(n^{1/2})$ & $12.8\pm0.1$ MiB \\
\midrule
\multirow{5}{*}{\cite{devet2012optimally}}
  & $16$ & $O(n^{1/2})$ & $12.8\pm0.1$ MiB
  & $10$ & $O(n^{1/2})$ & $12.8\pm0.1$ MiB \\
  & $18$ & $O(n^{1/2})$ & $12.8\pm0.1$ MiB
  & $11$ & $O(n^{1/2})$ & $12.8\pm0.1$ MiB \\
  & $20$ & $O(n^{1/2})$ & $12.8\pm0.1$ MiB
  & $12$ & $O(n^{1/2})$ & $12.8\pm0.1$ MiB \\
  & $22$ & $O(n^{1/2})$ & $12.8\pm0.1$ MiB
  & $13$ & $O(n^{1/2})$ & $12.8\pm0.1$ MiB \\
  & $24$ & $O(n^{1/2})$ & $12.8\pm0.1$ MiB
  & $14$ & $O(n^{1/2})$ & $12.8\pm0.1$ MiB \\
\bottomrule


\toprule
\multicolumn{1}{c|}{\multirow{2}{*}{\textbf{Scheme}}} 
  & \multicolumn{3}{c|}{\bf Vary $t$} 
  & \multicolumn{3}{c}{\bf Vary $\mathbb{F}_p$} \\
\cmidrule(lr){2-4} \cmidrule(lr){5-7}
  & ~$t$~ & Asymptotic & Experimental
  & ~~$\log|\mathbb{F}_p|$~~ & Asymptotic & Experimental \\
  
\midrule

\multirow{4}{*}{$\Gamma_1$}
  & $1$  & $O(n^{1/12})$ & $1.0\pm0.1$ MiB
  & $16$  & $O(n^{1/12})$ & $152 \pm 10 $ KiB \\
  & $2$  & $O(n^{1/6})$  & $3.2\pm0.1$ MiB
  & $32$  & $O(n^{1/12})$ & $267 \pm 10 $ KiB \\
  & $3$  & $O(n^{1/4})$  & $11.5\pm0.1$ MiB
  & $64$  & $O(n^{1/12})$ & $508 \pm 10 $ KiB \\
  & $4$  & $O(n^{1/3})$  & $41.0\pm0.3$ MiB
  & $128$ & $O(n^{1/12})$ & $1.0\pm0.1$ MiB\\
\midrule
\multirow{4}{*}{$\Gamma_2$}
  & $1$  & $O(n^{1/4})$  & $3.6\pm0.1$ MiB
  & $16$  & $O(n^{1/4})$  & $465 \pm 10$ KiB \\
  & $2$  & $O(n^{1/2})$  & $315.4 \pm 5$ MiB
  & $32$  & $O(n^{1/4})$  & $932 \pm 20$ KiB \\
  & $3$  & $O(n)$     & Too large  
  & $64$  & $O(n^{1/4})$  & $1.8\pm0.1$ MiB \\
  & $4$  & $O(n)$     & Too large  
  & $128$ & $O(n^{1/4})$  & $3.6\pm0.1$ MiB \\
\midrule
\multirow{4}{*}{\cite{goldberg2007improving}}
  & $1$  & $O(n^{1/2})$ & $12.8\pm0.1$ MiB
  & $16$  & $O(n^{1/2})$ & $1.6\pm0.1$ MiB \\
  & $2$  & $O(n^{1/2})$ & $12.8\pm0.1$ MiB
  & $32$  & $O(n^{1/2})$ & $3.2\pm0.1$ MiB \\
  & $3$  & $O(n^{1/2})$ & $12.8\pm0.1$ MiB
  & $64$  & $O(n^{1/2})$ & $6.4\pm0.1$ MiB \\
  & $4$  & $O(n^{1/2})$ & $12.8\pm0.1$ MiB
  & $128$ & $O(n^{1/2})$ & $12.8\pm0.1$ MiB \\
\midrule
\multirow{4}{*}{\cite{devet2012optimally}}
  & $1$  & $O(n^{1/2})$ & $12.8\pm0.1$ MiB
  & $16$  & $O(n^{1/2})$ & $1.6\pm0.1$ MiB \\
  & $2$  & $O(n^{1/2})$ & $12.8\pm0.1$ MiB
  & $32$  & $O(n^{1/2})$ & $3.2\pm0.1$ MiB \\
  & $3$  & $O(n^{1/2})$ & $12.8\pm0.1$ MiB
  & $64$  & $O(n^{1/2})$ & $6.4\pm0.1$ MiB \\
  & $4$  & $O(n^{1/2})$ & $12.8\pm0.1$ MiB
  & $128$ & $O(n^{1/2})$ & $12.8\pm0.1$ MiB \\
\bottomrule
\end{tabular}
\caption{Asymptotic and Experimental Per‐Server Communication Complexities under Variations of $k$, $b$, $t$, and $\log|\mathbb{F}_p|$, Averaged over 100 Trials. The Baseline Configuration is $(k,b,t,\log|\mathbb{F}_p|)=(20,12,1,128)$.}
\label{tab:vary k, b, t, F}
\end{table}

\clearpage

\vspace{5pt}
The experiments are divided into two phases. 
In Phase I, 
we employ a database of size $n=2^{26}$ over $\mathbb{F}_p$, with initial parameters $(k,b,t) = (20,12,1)$ and $p = 2^{128} + 51$. Thereafter, we individually vary $k$, $b$, $t$, and $\log p$, derive the asymptotic communication complexity per server, and measure the experimental communication volume. Throughout, for $\Gamma_1$ we set
$
w_1 = 2(k - b - 2)/t.
$ 
To minimize errors caused by randomness, each experiment is repeated $100$ times, and the average results are recorded as in {\bf Table.} \ref{tab:vary k, b, t, F}.
As evidenced by the data, although schemes $\Gamma_1$ and $\Gamma_2$ fall short of the protocols in \cite{goldberg2007improving} and \cite{devet2012optimally} for larger values of $t$, they incur markedly lower communication overhead when $t$ is small. Moreover, this benefit grows more pronounced as the total number of servers increases and the number of Byzantine servers decreases.
In Phase II, 
we consider the database of size \(n = 2^{16}\) over \(\mathbb{F}_p\), with parameters \((k,b,t) = (6,3,1)\). 
We instantiate two variants over distinct prime fields, namely \(p = 2^7 + 3\) and \(p = 2^{10} + 7\). For each configuration, the scheme is executed \(10^6\) times, and the worst-case cardinality of the output list is reported in {\bf Table.}~\ref{tab:list size}.
It is evident that the maximum list sizes of schemes \(\Gamma_1\) and \(\Gamma_2\) remain small and exhibit negligible variation as the field \(\mathbb{F}_p\) grows.  In contrast, the protocols of \cite{goldberg2007improving} and \cite{devet2012optimally} display a marked increase in list size.  Consequently, one can surmise that for sufficiently large \(\mathbb{F}_p\) there exist parameter choices \((k,b,t)\) and adversarial response strategies under which those earlier schemes yield substantially larger lists, whereas our constructions continue to produce output lists whose size remains bounded independently of \(p\).

\begin{table}[t]
    \centering
    \begin{tabular}{c|cc|cc}
    \toprule
       \multicolumn{1}{c|}{~~~\textbf{Scheme}~~~} 
       & ~~~~$\log |\mathbb{F}_p|$~~~~ &  ~~~\makecell{Worst List Size}~~~ 
       & ~~~~$\log |\mathbb{F}_p|$~~~~ &  ~~~\makecell{Worst List Size}~~~\\ 
    \midrule
       ~~$\Gamma_1$~~                   & 7 &  2 & 10 &   2\\ 
       ~~$\Gamma_2$~~                   & 7 &  3 & 10 &   4\\ 
       ~~\cite{goldberg2007improving}~~ & 7 &  8 & 10 &   15\\ 
       ~~\cite{devet2012optimally}~~    & 7 &  6 & 10 &   13\\
    \bottomrule
    \end{tabular}
    \caption{Worst-case output list sizes (over $10^6$ runs) for four ldBRPIR schemes with parameters $(k,b,t) = (6,3,1)$ and database size $n = 2^{16}$, instantiated over two prime fields of size $p = 2^7 + 3$ and $2^{10}+7$ (i.e. $\log |\mathbb{F}_p| = 7$ and $10$).}
    \label{tab:list size}
\end{table}

\section{Conclusion}
    We have introduced two perfect list-decodable BRPIR schemes, \( \Gamma_1 \) and \( \Gamma_2 \). The Byzantine tolerance bound of scheme \( \Gamma_1 \) significantly surpasses those established in \cite{goldberg2007improving} and \cite{devet2012optimally}.
    Compared to the schemes proposed in \cite{goldberg2007improving} and \cite{devet2012optimally}, our schemes demonstrate a notable improvement in communication complexity. This advantage is particularly pronounced in scenarios with a small and fixed privacy threshold, such as \( t = O(1) \). We achieve a small maximum output list size that is independent of the size of the finite field containing the data and depends solely on the number of servers. Additionally, our scheme \( \Gamma_1 \) offers higher Byzantine robustness than \cite{goldberg2007improving} and \cite{devet2012optimally}.

\clearpage

%
%
%
%

\bibliographystyle{splncs04}
\bibliography{bib}





\end{document}